\documentclass{article}
\usepackage[utf8]{inputenc}
\usepackage[letterpaper,margin=1.0in]{geometry}
\usepackage{hyperref}       
\usepackage{url}            
\usepackage{booktabs}       
\usepackage{amsfonts}       
\usepackage{bbding}
\usepackage{xcolor}         
\usepackage{algorithm}
\usepackage{algorithmic}
\usepackage{graphicx}
\usepackage{enumitem}
\usepackage{multirow}
\usepackage{color}
\usepackage{MnSymbol}
\usepackage{cleveref}
\usepackage{comment}
\usepackage{physics}
\usepackage[numbers,sort]{natbib}
\usepackage{amsmath}\allowdisplaybreaks
\usepackage{amsfonts,bm}
\usepackage{bbm}


















\def\1{\bf{1}}

\def\inner#1#2{\langle #1, #2 \rangle}

\def\vzero{{\bf{0}}}

\def\vg{{\bf{g}}}

\def\vj{{\bf{j}}}

\def\vw{{\bf{w}}}
\def\vx{{\bf{x}}}
\def\vy{{\bf{y}}}


\def\fN{{\mathcal{N}}}

\def\fP{{\mathcal{P}}}



\def\BR{{\mathbb{R}}}

\def\mA {{\bf A}}

\def\mD {{\bf D}}

\def\mI {{\bf I}}

\def\mO {{\bf O}}
\def\mP {{\bf P}}

\def\mT {{\bf T}}
\def\mU {{\bf U}}
\def\mV {{\bf V}}









\usepackage{amsthm}
\usepackage{etoolbox}

\makeatletter
\def\Ddots{\mathinner{\mkern1mu\raise\p@
\vbox{\kern7\p@\hbox{.}}\mkern2mu
\raise4\p@\hbox{.}\mkern2mu\raise7\p@\hbox{.}\mkern1mu}}
\makeatother

\makeatletter
\newcommand*{\rom}[1]{\expandafter\@slowromancap\romannumeral #1@}
\makeatother

\newtheorem{definition}{Definition}[section]
\newtheorem{theorem}{Theorem}[section]
\newtheorem{proposition}[theorem]{Proposition}
\newtheorem{lemma}[theorem]{Lemma}
\newtheorem{corollary}[theorem]{Corollary}
\newtheorem{assumption}{Assumption}
\theoremstyle{remark}
\newtheorem{remark}[theorem]{Remark}

\def\vx {{{\bf x}}}
\def\vw {{{\bf w}}}
\def\vy {{{\bf y}}}

\def\BR{{\mathbb{R}}}

\def\B{{\bf B}}

\def\C{{\bf C}}

\def\g{{\bf g}}

\def\U{{\bf U}}

\def\w{{\bf w}}

\def\x{{\bf x}}

\def\y{{\bf y}}

\def\0{{\bf 0}}
\def\1{{\bf 1}}

\def\calH{{\mathcal H}}

\def\calO{{\mathcal O}}
\def\calP{{\mathcal P}}

\def\RB{{\mathbb R}}
\def\EB{{\mathbb E}}

\usepackage{algorithm}
\usepackage{algorithmic}
\usepackage{enumitem}
\usepackage{hhline}

\def \EBP #1{\EB\left[#1\right]}

\def \var #1{\text{\rm Var}\left[#1\right]}

\renewcommand{\>}{\rangle}

\newcommand{\cc}[1]{{\color{magenta}#1}}

\title{Quantum Algorithms for Non-smooth Non-convex Optimization}
\author{Chengchang Liu\textsuperscript{* 1} \qquad Chaowen Guan\textsuperscript{* 2}\qquad Jianhao He\textsuperscript{ 1}\qquad John C.S. Lui\textsuperscript{ 1}\\
\vspace{-2mm} \\
\normalsize{\textsuperscript{1} The Chinese University of Hong Kong\quad 
\textsuperscript{2} University of Cincinnati
}\vspace{-2mm}\\
\\
\normalsize{ 
\texttt{7liuchengchang@gmail.com\qquad guance@ucmail.uc.edu}}\\
\normalsize{\texttt{jianhaohe9@cuhk.edu.hk\qquad cslui@cse.cuhk.edu.hk}}}

\date{}
\begin{document}
\maketitle
\renewcommand\thefootnote{*}
\footnotetext{Equal contributions.}
\begin{abstract}
{This paper considers the problem for finding the  $(\delta,\epsilon)$-Goldstein stationary point of Lipschitz continuous objective, which is a rich function class to cover a great number of important applications. 
We construct a novel zeroth-order quantum estimator for the gradient of the smoothed surrogate. 
Based on such estimator, we propose a novel quantum algorithm  that achieves a query complexity of $\tilde{\mathcal{O}}(d^{3/2}\delta^{-1}\epsilon^{-3})$ on the stochastic function value oracle, where $d$ is the dimension of the problem. 
We also enhance the query complexity to $\tilde{\mathcal{O}}(d^{3/2}\delta^{-1}\epsilon^{-7/3})$ by introducing a variance reduction variant. 
Our findings demonstrate the clear advantages of utilizing quantum techniques for non-convex non-smooth optimization, as they outperform the optimal classical methods on the dependency of $\epsilon$ by a factor of $\epsilon^{-2/3}$.  
}

\end{abstract}

\section{Introduction}
In this paper, we study the following problem
\begin{align}
\label{eq: obj}
    \min_{\x\in\RB^d}\big\{f(\x) \triangleq \EB_{\xi}\left[F(\x;\xi)\right]\big\},
\end{align}
where the stochastic component $F(\x;\xi)$ is $L$-Lipschitz continuous but possibly \textit{non-convex} and \textit{non-smooth}. Such problem receives growing attention recently since it is general enough to cover many important applications including deep neural networks~\cite{glorot2011deep,nair2010rectified}, reinforcement learning~\cite{choromanski2018structured,suh2022differentiable}, and statistical learning~\cite{fan2001variable,mazumder2011sparsenet,zhang2010nearly}.

Due to the absence of both smoothness and convexity in the objective function, neither the gradient nor the sub-differentials are valid anymore to measure the convergence behaviour. 
Clarke sub-differential is a natural extension for describing the first-order information of the Lipschitz continuous function~\cite{clarke1990optimization}, 
however, it is intractable for finding the 
near-approximate stationary point in terms of the Clarke sub-differential as suggested by the hard instances~\cite{kornowski2021oracle,tian2021hardness,zhang2020complexity}.
\citet{zhang2020complexity} introduce the notion of $(\delta,\epsilon)$-Goldstein stationary point (cf. Section~\ref{pre:nsnc}), which weakens the traditional stationary point by considering the convex hull of the Clarke sub-differentials. Following this, we focus on the problem of finding  the $(\delta,\epsilon)$-Goldstein stationary points of the objective.

There are great many optimization methods for finding the $(\delta,\epsilon)$-Goldstein stationary points via stochasic classical oracles~\cite{zhang2020complexity,kornowski2021oracle,davis2022gradient, tian2022finite,lin2022gradient,jordan2023deterministic,chen2023faster,sahinogluonline}. 
\citet{zhang2020complexity} proposed stochastic interpolated normalized gradient descent method (SINGD) with the first non-asymptotic result, which has the stochastic first-order complexity of $\calO(\delta^{-1}\epsilon^{-4})$. 
Later on, \citet{tian2022finite} developed the perturbed SINGD method which queries the gradient at the differentiable point
and established the same complexity. 
\citet{cutkosky2023optimal} improved the stochastic first-order oracle complexities to $\calO(\delta^{-1}\epsilon^{-3})$ by using the ``online to non-convex conversation'', assuming $f(\cdot)$ is differentiable. 
This improvement aligns with the theoretical lower bound~\cite{cutkosky2023optimal}.

Zeroth-order methods, which only query the function value oracle, are more practical for the Lipschitz continuous objective. 
This is because computing the first-order oracles can be extremely challenging~\cite{kakade2018provably,tian2022finite} or even  inaccessible for numerous real-world applications~\cite{duffie2010dynamic,hong2015discrete,power2005supply}. 
\citet{lin2022gradient} proposed a gradient-free method for finding the $(\delta,\epsilon)$-Goldstein stationary point within $\calO(d^{3/2}\delta^{-1}\epsilon^{-4})$ query complexity to the stochastic function value via a connection between the randomized smoothing~\cite{nesterov2017random} and the Goldstein stationary point.
This complexity was further improved to $\calO(d^{3/2}\delta^{-1}\epsilon^{-3})$ and $\calO(d\delta^{-1}\epsilon^{-3})$ by~\citet{chen2023faster,kornowski2023algorithm} respectively. However, all these methods using the classical oracles for finding the Goldstein stationary point face a bottleneck of $\delta^{-1}\epsilon^{-3}$ due to the lower bound reported by~\cite{cutkosky2023optimal}.

Recently, we have witnessed the power of the quantum optimization methods by accessing the quantum counterparts of the classical oracles for \textit{non-convex} optimization~\cite{gong2022robustness,childs2022quantum,zhang2021quantum,liu2023quantum,sidford2023quantum,zhang2024quantum}, \textit{convex} optimization~\cite{van2020convex,chakrabarti2023quantum,chakrabarti2020quantum,sidford2023quantum,zhang2024quantum}, and semi-definite programming~\cite{brandao2017quantum,brandao2019quantum,van2018improvements,van2017quantum}. 
However, most of these results focus on the deterministic methods and the case that the objective function is \textit{smooth}. 
\citet{garg2020no} and \citet{zhang2023quantum} showed the negative results for \textit{non-smooth convex} and 
\textit{smooth non-convex} optimization respectively, that quantum algorithms have no improved rates over the classical ones when the dimension is large.
\citet{sidford2023quantum} proposed stochastic quantum methods which show the advantage of using quantum stochastic first-order oracles for \textit{smooth} objectives when the dimension is relatively small.
To the best of our knowledge, there have no work that shows the quantum speedups for minimizing \textit{non-smooth non-convex} objectives, which is the most general and fundamental function class. Built upon this, it is a natural question to ask: 
\vskip 0.2cm
\textit{Can we go beyond the complexity of $\calO(\delta^{-1}\epsilon^{-3})$ for finding the $(\delta,\epsilon)$-Goldstein stationary point for stochastic non-smooth non-convex optimization by involving quantum oracles?}
\vskip 0.2cm
\begin{table*}[!t]
	\centering
	\caption{We summarize the complexities of classical and quantum zeroth-order methods for finding the $(\epsilon, \delta)$-Goldstein point of a \textit{non-smooth non-convex} objective, where $\Delta= f(\x_0)-f^*$ and $d$ is the dimension of the problem.}\label{tbl:main-2}
 
\setlength\tabcolsep{5.pt}
\vskip0.1cm
\begin{tabular}{c c c c c}
\toprule[.1em]
\begin{tabular}{c} 
    \bf Methods 
\end{tabular} &
\bf Oracle & \bf Order &
\bf Query Complexity &
\bf Reference \\
\toprule[.1em]

\begin{tabular}{c}  GFM
\end{tabular}
& classical & 0th
& $\calO\left(d^{3/2}\delta^{-1}\epsilon^{-4}\right)$ 
&  
\cite{lin2022gradient} 
\\  
\midrule    
\begin{tabular}{c}
GFM+ 
\end{tabular} 
& classical
& 0th
& $\calO\left(d^{3/2}\delta^{-1}\epsilon^{-3}\right)$   
&~\cite{chen2023faster}\\  
\midrule 
\begin{tabular}{c}
OptimalZO
\end{tabular} 
& classical
& 0th
&   ${\calO}\left(d \delta^{-1}\epsilon^{-3}\right)$ 
& \cite{kornowski2023algorithm}\\

\midrule 
\begin{tabular}{c}
QGFM
\end{tabular} 
& quantum
& 0th
&   $\tilde{\calO}\left(d^{3/2} \delta^{-1}\epsilon^{-3}\right)$ 
& \Cref{thm:GFQM}\\

\midrule

\begin{tabular}{c}
QGFM+ 
\end{tabular}
& quantum 
& 0th
& $\tilde{\calO}\left(d^{3/2}\delta^{-1}\epsilon^{-7/3}\right)$ 
& \Cref{thm:FGFQM}
\\[0.15cm]
\hline
\end{tabular} 
\end{table*}

\begin{table*}[!t]
	\centering
	\caption{We summarize the complexities of classical and quantum first-order methods for finding the $\epsilon$-stationary point (cf. Section~\ref{sec:improve}) of a \textit{smooth non-convex} objective, where $\Delta= f(\x_0)-f^*$ and $d$ is the dimension of the problem.}\label{tbl:main-nc}
\setlength\tabcolsep{5.pt}
\vskip0.1cm
\begin{tabular}{c c c c c}
\toprule[.1em]
\begin{tabular}{c} 
    \bf Methods 
\end{tabular} &
\bf Oracle 
& 
\bf Order 
&
\bf Query Complexity &
\bf Reference \\
\toprule[.1em]

\begin{tabular}{c}  SPIDER/PAGE
\end{tabular}
& classical
& 1st
& $\calO(\epsilon^{-3})$ 
&  
\cite{fang2018spider,li2021page} 
\\  
\midrule 
\begin{tabular}{c}
Q-SPIDER
\end{tabular} 
& quantum
& 1st
&   $\tilde{\calO}(d^{1/2}\epsilon^{-5/2})$ 
& \cite{sidford2023quantum}\\

\midrule    
\begin{tabular}{c}
QGM+ 
\end{tabular} 
& quantum
& 1st
& $\tilde{\calO}(d^{1/2}\epsilon^{-7/3})$   
&\Cref{thm:QGM+}\\  
[0.15cm]
\hline
\end{tabular} 
\label{tbl:first-order}
\end{table*}

We give an affirmative answer to the above question by proposing novel quantum zeroth-order methods and showing their explicit query complexities.
We summarize our contribution as follows.
\begin{itemize}[leftmargin=0.5cm]
\item We construct efficient quantum gradient estimators for the smoothed surrogate of the objectives with $\calO(1)$-queries of the function value oracles, 
which allows us to construct efficient quantum zeroth-order methods. 
Moreover, we provide explicit constructions of quantum superposition over required distributions. 
We present these results in Section~\ref{sec:Qest} and Appendix~\ref{sec:quantum_sampling_oracle}.
\item We propose the quantum gradient-free method (QGFM) and fast quantum gradient-free method (QGFM+) for \textit{non-smooth non-convex} optimization. We achieve the query complexities of $\calO(d^{3/2}\delta^{-1}\epsilon^{-3})$ for QGFM and $\calO(d^{3/2}\delta^{-1}\epsilon^{-7/3})$ for QGFM+  in finding the $(\delta,\epsilon)$-Goldstein stationary point using quantum stochastic function value oracle. The query complexity of QGFM+  surpasses the optimal results achieved by classical methods by a factor of $\epsilon^{-2/3}$. 
We compare our methods with the classical zeroth-order methods in Table~\ref{tbl:main-2} and present the results in Section~\ref{sec:alg}.
\item We generalize the algorithm framework of QGFM+  for \textit{smooth non-convex} optimization that the gradient of the object function is Lipschitz continuous.
We propose the fast quantum gradient method (QGM+), 
which takes the advantage of QGFM+  to choose the variance level adaptively. 
QGM+ enjoys an improved $\tilde{\calO}(d^{1/2}\epsilon^{-7/3})$ queries complexity of the quantum stochastic gradient oracle,
which outperforms the existing state-of-the-art method (Q-SPIDER~\cite{sidford2023quantum}) by a factor of $\epsilon^{-1/6}$.
We compare our method with the classical and quantum first-order methods in Table~\ref{tbl:first-order}.
A discussion on this is presented in Remark~\ref{rm:apply_to_QSPIDER}, and the formal results are stated in Appendix~\ref{sec:improve}.
\end{itemize}



\section{Preliminaries}
\label{sec:pre}
We introduce preliminaries for quantum computing model and non-smooth non-convex optimization in this section.
\subsection{Preliminaries for Quantum Computing Model}
Here we formally review the basics and some concepts from quantum computing that we work with. For more details, please refer to \citet{nielsen2010quantum}.

\noindent \textbf{Quantum Basics.}
A quantum state can be seen as a vector $\vx = (x_1, x_2, \dots, x_m)^\top$ in Hilbert space $\mathcal{H}^m$ such that $\sum_i |x_i|^2 = 1$. We follow the Dirac bra/ket notation on quantum states, i.e.,  we denote the quantum state for $\vx$ by $|x\rangle$ and denote $\vx^{\dagger}$ by $\langle x|$ , where $\dagger$ means the Hermitian conjugation. 

Given a state $|\psi\> = \sum^m_{i=1} c_i |i\>$, we call $c_i\in \mathbb{C}$ the amplitude of the state $|i\>$. 
Given two quantum states $|\vx\rangle \in \mathcal{H}^m$ and $|\vy\rangle \in \mathcal{H}^m$, we denote their inner product by $\langle \vx| \vy \rangle\triangleq \sum_i x^\dagger_i y_i$. 
Given $|\vx\> \in \mathcal{H}^m$ and $|y\> \in \mathcal{H}^n$, we denote their tensor product by $|\vx\> \otimes  |\vy\>\triangleq (x_1y_1, \cdots, x_m y_n)^\top \in \mathcal{H}^{m\times n}$. 
If we measure state $|\psi\> = \sum_{i=1}^m c_i |i\>$ in computational basis, we will obtain $i$ with probability $|c_i|^2$ and the state will collapse into $\ket{i}$ after measurement, for all $i$. A quantum algorithm works by applying a sequence of unitary operators to a initial quantum state.

\noindent \textbf{Quantum Query Complexity.} Corresponding to the classical query model, quantum query complexity considers the number of querying a black box of a particular function which needs to be invoked in order to solve a problem.
In many cases, the black box corresponds to the process that has the highest overhead, and therefore reducing the number of queries to it will effectively reduce the computational complexity of the entire algorithm.
For example, if a classical oracle $\C_f$ for a function $f$ is a black box that, when queried with a point $\x$, outputs the function value $\C_{f(\x)}=f(\x)$, 
then the corresponding quantum oracle $\U_f$ can be a unitary transformation that maps a quantum state $\ket{\x}\ket{q}$ to the state $\ket{\x}\ket{q + f(\x)}$. 
Moreover, given the superposition input $\sum_{\x,q}\alpha_{\x,q}\ket{\x}\ket{q}$, applying the quantum oracle once will, by linearity, output the quantum state  $\sum_{\x,q}\alpha_{\x,q}\ket{\x}\ket{q + f(\x)}$.


\subsection{Preliminaries for Non-convex Non-smooth Optimization}
\label{pre:nsnc}
We introduce the necessary background for non-convex non-smooth optimization, with the following mild assumption that the object function is Lipschitz continuous.
\begin{assumption}
\label{ass:lip}
We assume the stochastic component $F(\cdot;\xi)$ of the objective $f(\cdot)$ satisfies that $ |F(\x;\xi)-F(\y;\xi)|\leq L\|\x-\y\|$
    for every $\x,\y\in\RB^d$. Besides, we assume $f:\RB^d\to\RB$ is lower bounded and denote $f^*\triangleq \inf_{\x\in\BR^d} f(\x)$.
\end{assumption}
The Rademencher's theorem indicates that $f(\cdot)$ is differentiable almost everywhere under Assumption~\ref{ass:lip}, which allows us to define its Clarke sub-differential as follows~\cite{clarke1990optimization}.
\begin{definition}[Clarke sub-differential]
\label{def:clarkesub}
    The Clarke sub-differential of a Lipschitz function at point $\x$ is defined by $\partial f(\x)\triangleq {\text{\rm conv}}\left\{\g: \g=\lim_{\x_k\to\x}\nabla f(\x_k)\right\}$.
\end{definition}
We then introduce the Goldstein sub-differential~\cite{goldstein1977optimization} and the $(\delta,\epsilon)$-Goldstein stationary point~\cite{zhang2020complexity}.
\begin{definition}[Goldstein sub-differential]
The Goldstein sub-differential of a Lipschitz function at point $\x$ is defined by $\partial_{\delta} f(\x)\triangleq {\text{\rm conv}}\left\{\cup_{\y\in\B_{\delta}(\x)}\partial f(\y)\right\}$.
\end{definition}
\begin{definition}[$(\delta,\epsilon)$-Goldstein stationary point]
\label{def:goldestein}
We call $\x$ the $(\delta,\epsilon)$-Goldstein stationary point of a given Lipschitz function if it satisfies that
$ {\text{\rm dist}}(0,\partial_{\delta}f(\x))\leq \epsilon,$
where $\partial_{\delta} f(\x)$ is the Goldstein sub-differential.
\end{definition}
 
Next we define the smoothed surrogate of $f(\cdot)$ as follows.
\begin{definition}[$\delta$-smoothed surrogate]
\label{def:f_delta}
The $\delta$-smoothed surrogate of $f$ is defined by
\begin{align}
\label{eq:defsmooth}
    f_{\delta}(\x) \triangleq \EB_{\w\sim \fP}\left[f(\x+\delta \w)\right],
\end{align}
where $\fP$ is the uniform distribution on a unit ball.
\end{definition}
Although $f(\cdot)$ is non-smooth, its smoothed surrogate $f_{\delta}(\cdot)$ enjoys some good properties as presented in the following proposition~\cite{yousefian2012stochastic,duchi2012randomized,lin2022gradient,chen2023faster}.
\begin{proposition}
\label{prop:fdelta}
    If $f(\cdot)$ satisfies Assumption~\ref{ass:lip}, its smoothed surrogate $f_{\delta}(\cdot)$ satisfies that:
    \begin{itemize}[leftmargin=0.5cm]
        \item $|f_{\delta}(\cdot)-f(\cdot)|\leq \delta L$~~~\text{and}~~~$|f_{\delta}(\x)-f_{\delta}(\y)|\leq L\|\x-\y\|$.
        \item $\nabla f_{\delta}(\cdot)$ is $c\sqrt{d}L\delta^{-1}$-Lipschitz for some constant $c>0$, i.e. $\|\nabla f_{\delta}(\x)-\nabla f_{\delta}(\y)\|\leq c\sqrt{d}L\|\x-\y\|$.
        \item $\nabla f_{\delta} (\cdot) \in\partial_{\delta}f(\cdot)$, where $\partial_{\delta}f(\cdot)$ is the Goldstein sub-differential.
    \end{itemize}
\end{proposition}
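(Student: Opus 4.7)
My plan is to handle the three bullet points separately, exploiting only Assumption~\ref{ass:lip} and standard facts about the uniform distribution $\fP$ on the unit ball. Part~1 is essentially a one-line Jensen-type computation: write
\begin{align*}
|f_{\delta}(\x)-f(\x)| = \big|\EB_{\w\sim\fP}[f(\x+\delta\w)-f(\x)]\big| \leq \EB_{\w\sim\fP}[L\delta\|\w\|]\leq L\delta,
\end{align*}
using $\|\w\|\leq 1$, and similarly $|f_{\delta}(\x)-f_{\delta}(\y)|\leq \EB_{\w}|f(\x+\delta\w)-f(\y+\delta\w)|\leq L\|\x-\y\|$ by pulling the Lipschitz bound inside the expectation.

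For Part~2, the plan is to first apply the divergence theorem (or equivalently Stokes' theorem) to rewrite the gradient of the ball-smoothing as a surface integral over the unit sphere $\fS^{d-1}$:
\begin{align*}
\nabla f_{\delta}(\x) = \frac{d}{\delta}\,\EB_{\u\sim\fS^{d-1}}\!\big[f(\x+\delta\u)\,\u\big],
\end{align*}
where $\u$ is uniform on the sphere. This identity is the key workhorse: once it is in place, one can estimate
\begin{align*}
\|\nabla f_{\delta}(\x)-\nabla f_{\delta}(\y)\| = \frac{d}{\delta}\,\Big\|\EB_{\u}\big[(f(\x+\delta\u)-f(\y+\delta\u))\,\u\big]\Big\|
\end{align*}
and bound the right-hand side by the Lipschitz property combined with the spherical concentration $\EB_{\u}\|(\u\cdot\vv)\u\|\leq c\|\vv\|/\sqrt{d}$ for any fixed vector $\vv$. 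This $1/\sqrt{d}$ factor from the sphere is exactly what converts the naive $d/\delta$ bound into the desired $c\sqrt{d}L/\delta$. I expect this dimension-sharp step to be the main technical obstacle, since a careless bound of the surface integral would only produce an $\mathcal{O}(dL/\delta)$ Lipschitz constant; the improvement relies on noticing that shifting $\x\to\y$ affects $f(\x+\delta\u)-f(\y+\delta\u)$ only in the direction $\x-\y$, which then pairs with $\u$ through an inner product that is concentrated on the scale $1/\sqrt{d}$.

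Part~3 then follows almost for free from the construction. Because $f$ is differentiable almost everywhere (Rademacher), one can differentiate under the expectation defining $f_{\delta}$ to obtain $\nabla f_{\delta}(\x)=\EB_{\w\sim\fP}[\nabla f(\x+\delta\w)]$ wherever the integrand is defined. Every such $\nabla f(\x+\delta\w)$ lies in $\partial f(\x+\delta\w)$ with $\x+\delta\w\in\B_{\delta}(\x)$, so the expectation is a convex combination (limit of convex combinations) of elements of $\bigcup_{\y\in\B_{\delta}(\x)}\partial f(\y)$. By the very definition of the Goldstein sub-differential as the convex hull of this union, we conclude $\nabla f_{\delta}(\x)\in\partial_{\delta}f(\x)$, which closes the proof.
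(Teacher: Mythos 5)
The paper does not supply its own proof of Proposition~\ref{prop:fdelta}; it cites the result to \citet{yousefian2012stochastic,duchi2012randomized,lin2022gradient,chen2023faster}, where the $c\sqrt{d}L\delta^{-1}$-Lipschitz bound on $\nabla f_\delta$ is typically established by bounding the relative volume of the symmetric difference of two $\delta$-balls, $\mathrm{vol}(\B_\delta(\x)\,\triangle\,\B_\delta(\y))/\mathrm{vol}(\B_\delta) = \mathcal{O}(\sqrt{d}\,\|\x-\y\|/\delta)$, after writing $\nabla f_\delta(\x)-\nabla f_\delta(\y)$ as a difference of ball averages of $\nabla f$. Your spherical-integral route is a legitimate and arguably cleaner alternative, and Parts 1 and 3 of your sketch are correct as stated. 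Part 3 in particular is fine once one notes that $\partial_\delta f(\x)$ is compact in $\BR^d$ (compact-valued u.s.c.\ subdifferential map over the compact ball, plus Carath\'eodory), so the limit of convex combinations defining the expectation stays inside.

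The one genuine gap is the final step of Part 2. The concentration inequality you invoke, $\EB_\u\|(\u\cdot\vv)\u\|\leq c\|\vv\|/\sqrt d$, is a statement about integrands that \emph{factor} as a fixed linear form $\u\cdot\vv$ times $\u$; but the integrand you actually have is $g(\u)\u$ with $g(\u)=f(\x+\delta\u)-f(\y+\delta\u)$, and for a general Lipschitz $f$ this $g$ does not reduce to $\u\cdot(\x-\y)$ (nor to any scalar multiple of it). Applied literally, your plan only uses $|g(\u)|\leq L\|\x-\y\|$, which recovers the naive $\mathcal{O}(dL/\delta)$ rate you wanted to avoid. The fix is a Cauchy--Schwarz step that does realize the ``pair with $\u$'' intuition: for any unit vector $\a$,
\begin{align*}
\a^\top \EB_\u[g(\u)\u]
= \EB_\u\!\left[g(\u)\,(\a\cdot\u)\right]
\leq \sqrt{\EB_\u[g(\u)^2]}\sqrt{\EB_\u[(\a\cdot\u)^2]}
\leq L\|\x-\y\|\cdot\frac{1}{\sqrt d},
\end{align*}
using $\EB_\u[(\a\cdot\u)^2]=1/d$ for $\u$ uniform on $\fS^{d-1}$. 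Taking the supremum over unit $\a$ gives $\|\EB_\u[g(\u)\u]\|\leq L\|\x-\y\|/\sqrt d$, and multiplying by $d/\delta$ yields $\|\nabla f_\delta(\x)-\nabla f_\delta(\y)\|\leq \sqrt d\,L\,\|\x-\y\|/\delta$, i.e.\ the stated bound with $c=1$. With this substitution your Part 2 closes; as written, the cited lemma does not apply and the argument is incomplete.
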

\begin{remark}
\label{rmk:equiv}
    Proposition~\ref{prop:fdelta} implies that the task of finding the $(\delta,\epsilon)$-Goldstein stationary point of $f(\cdot)$ is equivalent to finding the $\epsilon$-stationary point of a smoothed function $f_{\delta}(\cdot)$, i.e. finding some point $\x$ such that $\|\nabla f_{\delta}(\x)\|\leq \epsilon$.
\end{remark}

\section{Zeroth-order Based Stochastic Quantum Estimator}
\label{sec:Qest}
In this section, we present a novel quantum estimator for the gradient of the smoothed surrogate $f_{\delta}(\cdot)$ by using quantum stochastic function value oracle, which is essential for designing our quantum algorithms for non-convex non-smooth optimization. 


\subsection{Quantum Estimators via Quantum Stochastic Function Value Oracle}
\label{sec:cons}
In this section, we construct quantum estimators for the gradient of the smoothed surrogate by $O(1)$-queries of quantum stochastic function values.

We start with the definition of the stochastic function value oracle.
Classically, a stochastic function value evaluation is defined as $F(\vx, \xi)$ for a function $f:\BR^d \rightarrow \BR$ with $\xi$ being such that $ \EB_{\xi} [F(\x,\xi)] = f(\x).$ 
In this work, we assume the access of a \textit{quantum} stochastic function value oracle $\mU_F$ for $f(\cdot)$, which is defined as follows.
\begin{definition}[Quantum stochastic function value oracle]\label{def:QSF_binary}
For $f: \BR^d\to\BR$, the quantum stochastic function value oracle, denoted by $\mU_F$, works as: $\mU_F: |\vx\> \otimes |\xi\> \otimes |b\> \longmapsto |\vx\>  \otimes |\xi\> \otimes |b + F(\vx, \xi)\>,$
where $F(\x,\xi)$ is sampled from a distribution $p_{\xi}(\cdot)$ such that $\EB_{\xi}[F(\x;\xi)]=F(\x).$
\end{definition}
One can construct the following stochastic gradient estimator for $\nabla f_{\delta}(\cdot)$~\cite{nesterov2017random,lin2022gradient,chen2023faster,kornowski2023algorithm,lin2024decentralized}:
\begin{align}
\label{eq: gradient_est}
    \g_{\delta}(\x;\w,\xi) \triangleq \frac{d}{2\delta} \left(F(\x+\delta \w;\xi)-F(\x-\delta\w;\xi)\right)\cdot\w,
\end{align} 
where $\w \in \RB^d$ is uniformly distributed on a unit sphere.
The following proposition shows that $\g_{\delta}(\x;\w,\xi)$ is a good estimator of $\nabla f_{\delta}(\cdot)$.

\begin{proposition}[{\cite[Proposition 3 and 4]{chen2023faster}}]
\label{prop:g_delta_property}
Under Assumption~\ref{ass:lip}, i.e. the random variable $\xi$ satisfies that
    \begin{align}
    \label{eq:xicondi}
         |F(\x;\xi)-F(\y;\xi)|\leq L\|\x-\y\|~~~\text{and}~~~\EB_{\xi}[F(\x;\xi)]=f(\x),
    \end{align}
    hold for all $\x,\y\in\RB^d$, 
then $\g_{\delta}(\x;\w,\xi)$ defined in \cref{eq: gradient_est} satisfies that $\EB_{\w,\xi} [\g_{\delta}(\x;\w,\xi)] = \nabla f_{\delta}(\x)$,
$\EB_{\w,\xi}[\|\g_{\delta}(\x;\w,\xi)-\nabla f_{\delta}(\x)\|^2]\leq c\pi dL^2$, and $\EB_{\w,\xi} [\|\g_{\delta}(\x;\w,\xi)-\g_{\delta}(\y;\w,\xi)\|^2\ \leq \frac{d^2L^2}{\delta^2}\|\x-\y\|^2$, where $c=16\sqrt{2}\pi$.
\end{proposition}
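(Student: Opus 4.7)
The plan is to verify the three assertions of the proposition in order, relying on standard identities for the randomized smoothing and the stochastic Lipschitz bound \eqref{eq:xicondi}.

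For the unbiasedness, I would begin by unconditioning on $\xi$ inside the definition of $\g_\delta$. Since $\EB_\xi[F(\vx;\xi)] = f(\vx)$ by \eqref{eq:xicondi} and the construction is linear in $F$, it suffices to show
\begin{align*}
\EB_{\vw\sim S^{d-1}}\!\left[\frac{d}{2\delta}\bigl(f(\vx+\delta\vw)-f(\vx-\delta\vw)\bigr)\vw\right] = \nabla f_\delta(\vx).
\end{align*}
This is a classical consequence of the divergence theorem: writing $f_\delta$ as a ball integral via Definition~\ref{def:f_delta} and differentiating in $\vx$ converts the ball integral into a sphere integral with weight $\vw$, yielding $\nabla f_\delta(\vx) = \tfrac{d}{\delta}\EB_{\vw\sim S^{d-1}}[f(\vx+\delta\vw)\vw]$. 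Symmetrizing under $\vw\mapsto -\vw$ produces the antisymmetric form matching \eqref{eq: gradient_est}.

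For the Lipschitz estimate (the third bullet), I would argue pointwise: for every fixed $\vw\in S^{d-1}$ and $\xi$,
\begin{align*}
\g_\delta(\vx;\vw,\xi) - \g_\delta(\vy;\vw,\xi) = \frac{d}{2\delta}\Bigl[\bigl(F(\vx+\delta\vw;\xi)-F(\vy+\delta\vw;\xi)\bigr) - \bigl(F(\vx-\delta\vw;\xi)-F(\vy-\delta\vw;\xi)\bigr)\Bigr]\vw,
\end{align*}
and apply the stochastic Lipschitz property to each parenthesized difference, obtaining $\|\g_\delta(\vx;\vw,\xi)-\g_\delta(\vy;\vw,\xi)\|\leq \tfrac{dL}{\delta}\|\vx-\vy\|$; squaring and taking expectation gives the stated bound.

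The main obstacle is the variance bound, because the straightforward estimate $|F(\vx+\delta\vw;\xi)-F(\vx-\delta\vw;\xi)|\leq 2L\delta\|\vw\|=2L\delta$ only yields $\|\g_\delta\|\leq dL$ and hence an $O(d^2 L^2)$ second moment—one power of $d$ too many. To recover the $O(dL^2)$ bound, my plan is to exploit the concentration of the uniform measure on $S^{d-1}$: rewrite the symmetric difference via the fundamental theorem of calculus for Lipschitz functions, $F(\vx+\delta\vw;\xi)-F(\vx-\delta\vw;\xi) = \int_{-\delta}^{\delta}\langle \nabla F(\vx+t\vw;\xi),\vw\rangle\,dt$, so that
\begin{align*}
\g_\delta(\vx;\vw,\xi) = d\,\langle \vv_\xi(\vw),\vw\rangle\,\vw,\qquad \vv_\xi(\vw)\triangleq \frac{1}{2\delta}\int_{-\delta}^{\delta}\nabla F(\vx+t\vw;\xi)\,dt,
\end{align*}
with $\|\vv_\xi(\vw)\|\leq L$. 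The key spherical identity $\EB_{\vw\sim S^{d-1}}[\langle \vu,\vw\rangle^2]=\|\vu\|^2/d$ would, if $\vv_\xi$ were independent of $\vw$, immediately yield $\EB[\|\g_\delta\|^2]\leq d^2\cdot L^2/d = dL^2$. The remaining work is to control the $\vw$-dependence of $\vv_\xi(\vw)$, which is handled by a Poincaré-type inequality on the sphere together with the $L$-Lipschitzness of $F$; this produces the absolute constant $c=16\sqrt{2}\pi$ appearing in the statement. Centering around $\nabla f_\delta(\vx)$ only improves the bound, giving the final variance estimate.
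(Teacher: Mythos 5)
The paper provides no proof of its own here; the statement is quoted verbatim from Chen et al.\ (their Propositions 3 and 4), which in turn rest on the standard spherical concentration argument going back to Shamir. Your plan follows the same route, and parts one (divergence theorem plus odd symmetrization) and three (a pointwise Lipschitz bound, squared) are correct and cleanly argued.

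For the variance bound you have identified the right engine --- a Poincar\'e-type inequality for the uniform measure on the sphere --- but your path to it is a detour. The $\vv_\xi(\vw)$ line-integral rewrite and the identity $\EB_\vw[\langle\vu,\vw\rangle^2]=\|\vu\|^2/d$ do not actually combine, since the $\vw$-dependence of $\vv_\xi(\vw)$ is precisely the obstruction, and there is no useful sense in which that dependence is ``controlled'' perturbatively. The clean observation is simpler: for every fixed $\xi$, the scalar $h_\xi(\vw)\triangleq F(\vx+\delta\vw;\xi)-F(\vx-\delta\vw;\xi)$ is an odd function of $\vw$, so $\EB_\vw[h_\xi(\vw)]=0$, and $h_\xi$ is $2\delta L$-Lipschitz in $\vw$. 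The spherical Poincar\'e inequality then yields $\EB_\vw[h_\xi(\vw)^2]=\Var_\vw[h_\xi(\vw)]\leq C(2\delta L)^2/d$ directly, hence $\EB_\vw[\|\g_\delta(\vx;\vw,\xi)\|^2]=\tfrac{d^2}{4\delta^2}\EB_\vw[h_\xi(\vw)^2]\leq C\,dL^2$; taking $\EB_\xi$ and then subtracting $\|\nabla f_\delta(\vx)\|^2$ (legitimate by unbiasedness) gives the claim. Finally, your sketch asserts rather than derives the constant $c=16\sqrt{2}\pi$; that value is inherited from the precise form of the concentration lemma used in Lin et al.\ and Chen et al., not something your argument produces.
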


Next, to exploit the power of quantum algorithms, we generalize \cref{eq: gradient_est} to its quantum counterpart. Building on \cref{eq: gradient_est} and Proposition~\ref{prop:g_delta_property}, $\g_{\delta}(\x;\w,\xi)$ can be interpreted as a random variable. In the quantum setting, accessing a random variable typically involves querying a \emph{quantum sampling oracle}, which returns a quantum superposition over the associated distribution.

\begin{definition}[Quantum sampling oracle] 
\label{def:quantum_sample_O}
For a random variable $X$ with sample space $\Omega$, its quantum sampling oracle $\mO_X$ is defined as $\mO_X : |0 \> \longmapsto \sum_{\vx } \sqrt{\Pr[X= \vx]} |\vx\> \otimes |\psi_\vx\>,$
where $|\psi_\vx\>$ is an arbitrary quantum state for every $\vx$.
\end{definition}
The content in second quantum register can also be viewed as possible quantum garbage appeared during the implementation of the oracle. Observe that if we directly measure the output of $\mO_X$, it will collapse to a classical sampling access to $X$ that returns a random sample $\vx$ with respect to probability $\Pr[X= \vx]$. Note that the output of $\mO_X$ can be represented as integral over continuous random variables as well, as used in \cite{childs2022quantumB,sidford2023quantum}.

Hence, based on our observation that $\g_{\delta}(\x;\w,\xi)$ can be viewed as a random variable, our target oracle $\mO_{\vg_\delta}$--quantum stochastic gradient oracle--is essentially a quantum sampling oracle. Given upon this, we formally define the quantum $\delta$-estimated stochastic gradient oracle as follows.


\begin{definition}[Quantum $\delta$-estimated stochastic gradient oracle]\label{def:quantum_delta_gradient}
For $f_{\delta}(\cdot):\BR^{d}\to\BR$, its quantum $\delta$-estimated stochastic gradient oracle is defined as
    \[
    \mO_{\vg_\delta} : |\vx\> \otimes |\vzero\> \otimes  |\vzero\> \longmapsto |\vx\> \otimes \sum_{\xi, \vw} \sqrt{\Pr[\vw, \xi]} |\g_{\delta}(\x;\w,\xi)\> \otimes |\psi_{\vw, \xi}\>,
    \]
where the random variable $\vw$ is uniformly distributed on a unit sphere and $\xi$ satisfies \cref{eq:xicondi}.
\end{definition}

Proposition~\ref{prop:g_delta_property} implies $\vg_\delta(\cdot)$ can serve as an estimator of $\nabla f_{\delta}$, and it can be calculated with access to a quantum $\delta$-estimated stochastic gradient oracle as defined above. The following theorem shows that such oracle can be built with only $\calO(1)$ access to the quantum stochastic function value oracle. 
\begin{lemma}\label{lm:O_g_delta}
Given access to a quantum sampling oracle $\mO_{\xi, \vw}$ to the joint distribution on $(\xi, \vw)$, one can construct a quantum $\delta$-estimated stochastic gradient oracle (as defined in \Cref{def:quantum_delta_gradient}) with two queries to the quantum stochastic function value oracle $\mU_F$. 
\end{lemma}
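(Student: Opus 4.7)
The plan is to build $\mO_{\vg_\delta}$ by direct composition: use the given sampling oracle $\mO_{\xi,\vw}$ once to create the required superposition, run two calls to $\mU_F$ to obtain the two finite-difference function values, and glue everything together with standard reversible quantum arithmetic. Since $\vg_\delta(\vx;\vw,\xi)$ is a deterministic polynomial function of $\vx,\vw$ and the two values $F(\vx\pm\delta\vw;\xi)$, no further access to any oracle is needed beyond those two queries.

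Concretely, I would treat the third input register $|\vzero\>$ as containing several ancilla sub-registers, and first apply $\mO_{\xi,\vw}$ to produce
\[
|\vx\>\otimes|\vzero\>\otimes\sum_{\xi,\vw}\sqrt{\Pr[\vw,\xi]}\,|\xi\>|\vw\>|\psi^{(0)}_{\xi,\vw}\>,
\]
where $|\psi^{(0)}_{\xi,\vw}\>$ is the intrinsic garbage of $\mO_{\xi,\vw}$. Next, reversible arithmetic circuits (controlled adders and a fixed scalar multiplication by $\delta$) copy $|\vx+\delta\vw\>$ and $|\vx-\delta\vw\>$ into two fresh ancilla registers from the $|\vx\>$ and $|\vw\>$ registers, incurring no query to $\mU_F$.

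I then invoke $\mU_F$ once with inputs $|\vx+\delta\vw\>|\xi\>|0\>$ to obtain $|F(\vx+\delta\vw;\xi)\>$, and a second time with $|\vx-\delta\vw\>|\xi\>|0\>$ to obtain $|F(\vx-\delta\vw;\xi)\>$, using exactly the two queries allowed by the statement. Finally, another reversible arithmetic circuit computes
\[
\vg_\delta(\vx;\vw,\xi)\;=\;\frac{d}{2\delta}\bigl(F(\vx+\delta\vw;\xi)-F(\vx-\delta\vw;\xi)\bigr)\vw
\]
into the initially empty second register of the target state. All remaining auxiliary contents (namely $|\xi\>$, $|\vw\>$, $|\vx\pm\delta\vw\>$, $|F(\vx\pm\delta\vw;\xi)\>$, and $|\psi^{(0)}_{\xi,\vw}\>$) depend only on $(\xi,\vw)$ and can be absorbed into the garbage register $|\psi_{\vw,\xi}\>$ permitted by \Cref{def:quantum_delta_gradient}, so no uncomputation is required.

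There is no substantive obstacle beyond verifying that the output state really has the form prescribed by \Cref{def:quantum_delta_gradient}; the key observation that makes the construction clean is that the definition allows arbitrary $(\xi,\vw)$-dependent auxiliary state, which is precisely what falls out of a straightforward reversible simulation of the classical formula for $\vg_\delta$. Once this is noted, linearity of the unitaries ensures the superposition structure is preserved, and counting queries gives exactly two calls to $\mU_F$.
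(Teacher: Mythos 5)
Your construction matches the paper's proof: both first apply $\mO_{\xi,\vw}$ to prepare the superposition, use reversible arithmetic (addition/scaling) to form $\vx\pm\delta\vw$, invoke $\mU_F$ exactly twice to load $F(\vx\pm\delta\vw;\xi)$, combine them by reversible subtraction, scalar multiplication and coordinate-wise multiplication by $\vw$, and then absorb all intermediate $(\xi,\vw)$-dependent registers into the permitted garbage state $|\psi_{\vw,\xi}\>$. The only difference is cosmetic: the paper packages the post-sampling steps as an explicit auxiliary unitary $\mU_{\vg,\delta}$ and names the arithmetic subroutines ($\mA_\pm$, $\mathsf{sub}$, $\mathsf{Fmul}$, $\mU_{\mathsf{mul}}$), whereas you describe them in prose; the substance and the query count are the same.
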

\begin{remark}
    Note that in Lemma~\ref{lm:O_g_delta}, we assume a black box access to quantum sampling oracle $\mO_{\xi,\w}$ following \citet{sidford2023quantum}. 
    We present the explicit construction of such oracle in Appendix~\ref{sec:quantum_sampling_oracle}.
\end{remark}

Similarly, we can also constructed the estimator of $\nabla f_{\delta}(\x)-\nabla f_{\delta}(\y)$ by the following oracle:
   \[
    \mO_{\Delta\vg_\delta} : |\vx\> \otimes |\vy\> \otimes|\vzero\> \otimes  |\vzero\> \longmapsto |\vx\> \otimes|\vy\> \otimes \sum_{\xi, \vw} \sqrt{\Pr[\vw, \xi]} |\g_{\delta}(\x;\w,\xi)-\g_{\delta}(\y;\w,\xi)\> \otimes |\psi_{\vw, \xi}\>,
    \]
with only $\calO(1)$-queries of stochastic quantum function value oracle.
\begin{corollary}
\label{cor:est_deltag}
   Under the same condition as in  Lemma~\ref{lm:O_g_delta}, one can construct $\mO_{\Delta\vg_{\delta}}$ with four queries to the quantum stochastic function value oracle $\mU_F$.
\end{corollary}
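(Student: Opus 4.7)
The plan is to mirror the construction in the proof of Lemma~\ref{lm:O_g_delta} but perform it coherently at the two evaluation points $\vx$ and $\vy$ simultaneously, sharing a single draw of $(\vw,\xi)$. The key identity is
\[
\g_{\delta}(\vx;\vw,\xi) - \g_{\delta}(\vy;\vw,\xi) = \frac{d}{2\delta}\Bigl[\bigl(F(\vx+\delta\vw;\xi)-F(\vx-\delta\vw;\xi)\bigr) - \bigl(F(\vy+\delta\vw;\xi)-F(\vy-\delta\vw;\xi)\bigr)\Bigr]\vw,
\]
so the difference is a deterministic function of $\vw$ together with the four function values $F(\vx\pm\delta\vw;\xi)$ and $F(\vy\pm\delta\vw;\xi)$; no additional randomness is involved beyond what Lemma~\ref{lm:O_g_delta} already draws. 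Producing these four values in coherent superposition will therefore require exactly four invocations of $\mU_F$.

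Concretely, first I would prepare the superposition over $(\vw,\xi)$ using one call to $\mO_{\xi,\vw}$ on a fresh ancilla, tensor with the input registers $|\vx\>|\vy\>$ and several zero-initialized scratch registers, and then use reversible in-place arithmetic (no oracle queries) to compute the four shift points $\vx+\delta\vw$, $\vx-\delta\vw$, $\vy+\delta\vw$, $\vy-\delta\vw$ into scratch. Next, I would invoke $\mU_F$ four times, once per shift point, to write $F(\vx\pm\delta\vw;\xi)$ and $F(\vy\pm\delta\vw;\xi)$ into four further scratch registers; this exhausts the oracle budget. A final round of reversible arithmetic combines these four values with $\vw$ according to the displayed identity and writes the result into the output register.

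The only step that warrants care is the disposal of intermediate state. The four shift-point registers are uncomputed by running the reversible arithmetic backwards at zero oracle cost, so they leave no residue. The four function-value registers together with $|\vw\>|\xi\>$ and the ancilla emitted by $\mO_{\xi,\vw}$ are absorbed into the garbage $|\psi_{\vw,\xi}\>$ permitted by \Cref{def:quantum_delta_gradient}, exactly as in the proof of Lemma~\ref{lm:O_g_delta}. Importantly, we do \emph{not} uncompute the four function-value registers, which would otherwise double the count to eight queries; this is permissible precisely because the target oracle is defined to allow arbitrary garbage. The resulting query cost is four uses of $\mU_F$, as claimed.
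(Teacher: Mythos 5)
Your construction is essentially the same as the paper's: a single draw of $(\vw,\xi)$ via $\mO_{\xi,\vw}$, four queries to $\mU_F$ to obtain $F(\vx\pm\delta\vw;\xi)$ and $F(\vy\pm\delta\vw;\xi)$ in superposition, query-free reversible arithmetic to form the difference, and absorbing the intermediate registers into the permitted garbage state rather than uncomputing them. The only cosmetic difference is that you subtract the two scalar finite differences before scaling by $\vw$, whereas the paper first forms the two vector estimators $\vg_\delta(\x;\w,\xi)$ and $\vg_\delta(\y;\w,\xi)$ and then applies a vector $\mathsf{sub}$; both yield the same oracle count of four.
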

\begin{remark}
    We assume a black-box access to quantum sampling oracle $\mO_{\vw, \xi}$ like the previous work~\cite{sidford2023quantum}, 
    In Appendix~\ref{sec:quantum_sampling_oracle}, we show how to efficiently realize the quantum sampling oracles on desirable distributions.
\end{remark}
\subsection{Mini-batch Quantum Estimators via Quantum Mean Estimation}
\label{sec:mini-batch}
We constructed the quantum oracles $\mO_{\g_\delta}$ and $\mO_{\Delta\g_\delta}$ with $\calO(1)$-queries of quantum function value oracles in Section~\ref{sec:cons}. 
These oracles produce outputs in the form of random variables. Specifically,$\mO_{\g_\delta}$ provides an output with expectation $\nabla f_{\delta}(\x)$ with the input $\x$, and $\mO_{\Delta\g_\delta}$ provides an output with the expectation $\nabla f_{\delta}(\x)-\nabla f_{\delta}(\y)$ for $\mO_{\Delta\g_\delta}$ with the inputs $\x$ and $\y$.

The variance of the outputs can be reduced by constructing the mini-batch estimator.
Inspired by the recent advance on quantum mean estimation~\cite{cornelissen2022near,cornelissen2023sublinear,sidford2023quantum} which improve the classical mini-batch estimator for multi-dimensional random variables, 
we construct improved estimators for $\nabla f_{\delta}(\x)$ and $\nabla f_{\delta}(\x)-\nabla f_{\delta}(\y)$. 
We formally present the results in the following theorem.

\begin{theorem}
\label{thm:mini-batch}
    Under Assumption~\ref{ass:lip}, and given access to a quantum sampling oracle $\mO_{\xi, \vw}$ to the joint distribution on $(\xi, \vw)$, it holds that:
    \begin{enumerate}[leftmargin=0.5cm]
        \item there exits an algorithm that can construct an unbiased quantum estimator $\hat{\g}$ of $\nabla f_{\delta}(\x)$ such that $ \EBP{\|\hat{\g}-\nabla f_{\delta}(\x)\|^2}\leq \hat{\sigma}^2_1$ within $\tilde{\calO}(dL\hat{\sigma}_1^{-1})$ queries of $\U_F$ in expectation.
        \item there exits an algorithm that can construct an unbiased quantum estimator ${\Delta}\g$ of $\nabla f_{\delta}(\x)-\nabla f_{\delta}(\y)$ such that  $\EBP{\|{\Delta}\g-(\nabla f_{\delta}(\x)-\nabla f_{\delta}(\y))\|^2}\leq \hat{\sigma}_2^2$ within $\tilde{\calO}(d^{3/2}L\|\y-\x\|\hat{\sigma}_2^{-1}\delta^{-1})$ queries of $\U_F$ in expectation.
    \end{enumerate}
\end{theorem}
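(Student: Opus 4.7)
The plan is to reduce both claims to the multivariate quantum mean estimation algorithm of \citet{cornelissen2022near}, which, given a quantum sampling oracle for a $d$-dimensional random vector $X$ with $\EB[\|X-\EB[X]\|^2]\le\sigma^2$, outputs $\hat X$ with $\|\hat X-\EB[X]\|\le\epsilon$ with probability at least $1-\eta$ using $\tilde{\calO}(\sqrt{d}\,\sigma\epsilon^{-1})$ queries of the sampling oracle. Composing this with the oracles $\mO_{\g_{\delta}}$ and $\mO_{\Delta\g_{\delta}}$ from Lemma~\ref{lm:O_g_delta} and Corollary~\ref{cor:est_deltag}, each of which costs only $\calO(1)$ queries to $\U_F$, will then give the claimed bounds once we plug in the correct standard-deviation parameters.

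For part~(1), Proposition~\ref{prop:g_delta_property} gives $\EB[\|\g_{\delta}(\x;\w,\xi)-\nabla f_{\delta}(\x)\|^2]\le c\pi dL^2$, so the relevant standard deviation is $\sigma=\calO(\sqrt{d}\,L)$. Running the mean estimator on $\mO_{\g_{\delta}}$ at target accuracy $\hat\sigma_1$ thus takes $\tilde{\calO}(\sqrt{d}\cdot\sqrt{d}L\hat\sigma_1^{-1})=\tilde{\calO}(dL\hat\sigma_1^{-1})$ queries of $\U_F$. For part~(2), the second bound in Proposition~\ref{prop:g_delta_property} gives $\EB[\|\g_{\delta}(\x;\w,\xi)-\g_{\delta}(\y;\w,\xi)\|^2]\le d^2L^2\delta^{-2}\|\x-\y\|^2$, so $\sigma=\calO(dL\|\x-\y\|\delta^{-1})$, and the same reduction applied to $\mO_{\Delta\g_{\delta}}$ gives $\tilde{\calO}(\sqrt{d}\cdot dL\|\x-\y\|(\delta\hat\sigma_2)^{-1})=\tilde{\calO}(d^{3/2}L\|\x-\y\|(\delta\hat\sigma_2)^{-1})$ queries, as claimed.

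The main technical obstacle is that \citet{cornelissen2022near} outputs a \emph{high-probability} estimator, whereas the theorem demands an \emph{unbiased} estimator with a bound on its second moment. The plan is to resolve this by a standard debiasing construction based on a randomly truncated telescoping series, in the spirit of the Blanchet--Glynn/Rhee--Glynn multilevel unbiased estimators. Concretely, I will run independent mean estimators $\tilde\g_k$ at geometrically refining target accuracies $2^{-k}\hat\sigma_1$ with geometrically decaying failure probabilities, draw a random level $K$ from a geometric distribution on $\BN$, and output $\tilde\g_0+2^{K+1}(\tilde\g_{K+1}-\tilde\g_K)$ with an appropriate scale correction. A short calculation, using the telescoping identity together with a union bound over the per-level failure events and the uniform norm bound $\|\g_{\delta}\|\le dL$ available from the construction of $\mO_{\g_{\delta}}$, shows that this combined estimator is unbiased and has second moment $\calO(\hat\sigma_1^2)$, while its expected number of levels is $\calO(1)$; hence the overall expected query cost exceeds the top-level cost by at most a polylogarithmic factor, preserving the stated $\tilde{\calO}$ bounds. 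An identical construction handles part~(2), using the pointwise bound $\|\g_{\delta}(\x;\w,\xi)-\g_{\delta}(\y;\w,\xi)\|\lesssim dL\|\x-\y\|\delta^{-1}$ in place of $dL$.
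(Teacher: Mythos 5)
Your plan reaches the same conclusion as the paper, but by a noticeably longer route. The paper's proof is essentially a one-step application of a black-box result it records as Theorem~\ref{thm:random-est} (\cite[Theorem 4]{sidford2023quantum}): that theorem \emph{already} produces an \emph{unbiased} estimator of the mean of a $d$-dimensional random variable with $\var{X}\le\hat L^2$, with second moment at most $\hat\sigma^2$, using an expected $\tilde\calO(\hat L\sqrt d\,\hat\sigma^{-1})$ queries to the sampling oracle. With this in hand, the proof simply reads off $\hat L=\sqrt d L$ (resp.\ $\hat L= dL\|\x-\y\|\delta^{-1}$) from Proposition~\ref{prop:g_delta_property}, multiplies by the $\calO(1)$ cost of one call to $\mO_{\g_\delta}$ or $\mO_{\Delta\g_\delta}$ from Lemma~\ref{lm:O_g_delta} and Corollary~\ref{cor:est_deltag}, and is done.

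What you do instead is start from the weaker primitive (the high-probability Cornelissen--Hamoudi--Jerbi mean estimator) and hand-build the debiasing via a randomly truncated Rhee--Glynn telescoping sum. This is a correct strategy, and your identification of the needed ingredients is right: the variance/Lipschitz bounds from Proposition~\ref{prop:g_delta_property} to set $\sigma$, and the pointwise bounds $\|\g_\delta\|\le dL$ and $\|\g_\delta(\x;\cdot)-\g_\delta(\y;\cdot)\|\le dL\|\x-\y\|/\delta$ to control the contribution of failure events to the bias and second moment. But you are in effect re-deriving Sidford--Zhang's Theorem~4 inside the proof. Since the paper is free to cite that result, your detour is extra work rather than a gap.

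One caution on the debiasing sketch if you do carry it out: the choice of level distribution is delicate. If level $k$ costs $\propto 2^k$ queries and you put mass $\propto 2^{-k}$ on level $k$, the expected query count diverges (each level contributes the same constant in expectation). The standard fix is to pick the level-$k$ probability $\propto q^k$ with $1/4<q<1/2$, so that the per-level variance contributions $\propto 4^{-k}/q^k$ still sum and the expected cost $\propto\sum_k q^k 2^k$ is finite; your phrase ``with an appropriate scale correction'' hints at this but you should make it explicit if you want a self-contained argument. Alternatively, just invoke \cite[Theorem 4]{sidford2023quantum} as the paper does and skip the construction entirely.
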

\begin{remark}
    Compared to the classical mini-batch estimator for $\nabla f_{\delta}(\x)$, which requires $\calO(dL^2\hat{\sigma}_1^{-2})$ queries of $\C_F$ to achieve $\hat{\sigma}_1^2$ variance level (\cite[Corollary 2.1]{chen2023faster}), our mini-batch quantum estimator for $\nabla f_{\delta}(\x)$ in Theorem~\ref{thm:mini-batch} reduces a factor of $L\hat{\sigma}_1^{-1}$ without increasing the dimension dependency. 
\end{remark}



\section{Quantum Algorithms for Finding the Goldstein Stationary Point}
\label{sec:alg}
In this section, we develop novel quantum algorithms for finding the $(\delta,\epsilon)$-Goldstein stationary point of a non-smooth non-convex objective $f(\cdot)$. 
Instead of finding the stationary point directly, we consider finding the $\epsilon$-stationary point of its smoothed surrogate $f_{\delta}(\cdot)$, which is equivalent to the original problem according to Remark~\ref{rmk:equiv}. 
The classical zeroth-order methods based on such equivalence require to access the gradient estimator to $\nabla f_{\delta}(\cdot)$ by stochastic function values~~\cite{lin2022gradient,chen2023faster,kornowski2023algorithm,lin2024decentralized}. 
Different from the classical methods, we can take the advantage of the quantum estimators, which can be constructed by accessing quantum stochastic function value oracles due to our novel results in Section~\ref{sec:Qest}. 

We first propose an algorithm which uses the quantum gradient estimator to replace $\nabla f_{\delta}(\x)$ to do the gradient descent step at each iteration. 
We present the quantum gradient-free method (QGFM) in Algorithm~\ref{alg:QGFM}. 
Given a desired variance level $\hat{\sigma}_t^2$, {line 2} of Algorithm~\ref{alg:QGFM} can be constructed explicitly and efficiently by the quantum stochastic 
function value oracles $\U_F$ according to Theorem~\ref{thm:mini-batch}.
The following theorem gives the upper bound on the total $\U_F$ that Algorithm~\ref{alg:QGFM} require to access for finding the $(\delta,\epsilon)$-Goldstein stationary point.
\begin{algorithm}[t]
\caption{Quantum Gradient-Free Method (QGFM)}\label{alg:QGFM}
\begin{algorithmic}[1]
\STATE \textbf{for} $t=0,1\dots T$\\[0.15cm]
\STATE \quad Construct $\g_t$ as an unbiased quantum estimator of $\nabla f_{\delta}(\x_t)$ with variance at most $\hat{\sigma}_t^2$ using $\U_F$ according to Theorem~\ref{thm:mini-batch}. \\[0.15cm]
\STATE \quad $\x_{t+1}=\x_t-\eta{\g}_t$\\[0.15cm]
\STATE \textbf{end for}
\end{algorithmic}
\end{algorithm}

\begin{theorem}
\label{thm:GFQM}
    Under Assumption~\ref{ass:lip}, by setting the parameter in Algorithm~\ref{alg:QGFM} as
    $\eta=\delta/(2d^{1/2}L)~~~\text{and}~~~ \hat{\sigma}_t^2 \equiv {\epsilon^2}/{2},$
then the total queries of stochastic quantum function value oracle $\U_F$ for finding the $(\delta,\epsilon)$-Goldstein stationary point of $f(\cdot)$ can be bounded by $ \tilde{\calO}\left(d^{3/2} \left(\frac{L^3}{\epsilon^{3}}+\frac{L^2\Delta}{\delta\epsilon^{3}}\right)\right),$
where $\Delta = f(\x_0)-f^*$.
\end{theorem}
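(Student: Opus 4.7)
The plan is to carry out the standard smooth-nonconvex SGD analysis on the surrogate $f_\delta$ rather than on $f$ itself (which is legitimate by Remark~\ref{rmk:equiv} and Proposition~\ref{prop:fdelta}), and then convert the iteration count into a query count via Theorem~\ref{thm:mini-batch}. The two ingredients that make this straightforward are: (i) $f_\delta$ is $L_\delta$-smooth with $L_\delta = c\sqrt{d}L/\delta$, and (ii) each step of Algorithm~\ref{alg:QGFM} uses an unbiased estimator $\g_t$ of $\nabla f_\delta(\x_t)$ with variance controlled by $\hat\sigma_t^2$.

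First I would apply the quadratic upper bound coming from $L_\delta$-smoothness and take conditional expectation, using unbiasedness and $\E[\|\g_t-\nabla f_\delta(\x_t)\|^2\mid\x_t]\leq\hat\sigma_t^2$, to get
\[
\E[f_\delta(\x_{t+1})\mid\x_t]\leq f_\delta(\x_t) - \Bigl(\eta - \tfrac{L_\delta\eta^2}{2}\Bigr)\|\nabla f_\delta(\x_t)\|^2 + \tfrac{L_\delta\eta^2}{2}\hat\sigma_t^2.
\]
The choice $\eta = \delta/(2\sqrt{d}L)$ gives $L_\delta\eta = c/2 \leq 1$, so the coefficient of $\|\nabla f_\delta(\x_t)\|^2$ is at least $\eta/2$. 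Telescoping from $t=0$ to $T-1$, taking total expectation, and using the elementary bound $f_\delta(\x_0)-f^*\leq (f(\x_0)-f^*) + \delta L = \Delta + \delta L$ (which follows from $|f_\delta-f|\leq\delta L$), I obtain
\[
\min_{0\leq t<T}\E\|\nabla f_\delta(\x_t)\|^2 \leq \frac{2(\Delta+\delta L)}{\eta T} + L_\delta\eta\,\hat\sigma^2.
\]
Plugging in $\hat\sigma^2 = \epsilon^2/2$ and $\eta = \delta/(2\sqrt{d}L)$ makes the noise term $O(\epsilon^2)$, so choosing $T = \Theta\bigl(\sqrt{d}L(\Delta+\delta L)/(\delta\epsilon^2)\bigr)$ forces $\min_t\E\|\nabla f_\delta(\x_t)\|^2 \leq \epsilon^2$. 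Returning a uniformly random iterate $\x_\tau$ and applying Markov's inequality produces some $\x_\tau$ with $\|\nabla f_\delta(\x_\tau)\|\leq O(\epsilon)$, and since $\nabla f_\delta(\cdot)\in\partial_\delta f(\cdot)$ by Proposition~\ref{prop:fdelta}, this $\x_\tau$ is a $(\delta,\epsilon)$-Goldstein stationary point of $f$ (after rescaling $\epsilon$ by a constant).

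Finally I would translate iterations to queries. By Theorem~\ref{thm:mini-batch}, constructing each $\g_t$ at variance level $\hat\sigma_t^2=\epsilon^2/2$ costs $\tilde{\calO}(dL/\epsilon)$ queries of $\U_F$ in expectation, and by linearity of expectation the total expected budget is
\[
T\cdot\tilde{\calO}\!\left(\frac{dL}{\epsilon}\right) = \tilde{\calO}\!\left(\frac{d^{3/2}L^2(\Delta+\delta L)}{\delta\epsilon^3}\right) = \tilde{\calO}\!\left(d^{3/2}\!\left(\frac{L^3}{\epsilon^3} + \frac{L^2\Delta}{\delta\epsilon^3}\right)\right),
\]
which matches the claimed bound. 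The main obstacle is not conceptual but bookkeeping: (a) aligning the constant $c$ from Proposition~\ref{prop:fdelta} with the prescribed step size so that the descent inequality produces a clean $-\tfrac{\eta}{2}\|\nabla f_\delta\|^2$ term and the noise term is truly dominated by $\epsilon^2$; and (b) handling that Theorem~\ref{thm:mini-batch}'s query cost is only guaranteed in expectation, so the final budget is an expected total (which is acceptable for a $\tilde{\calO}$ statement and upgradable to high probability by a single Markov step if desired).
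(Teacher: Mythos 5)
Your proposal is correct and follows essentially the same route as the paper's proof: apply the $L_\delta$-smooth descent lemma to $f_\delta$ with the unbiased quantum estimator, use the prescribed step size to make the $\|\nabla f_\delta\|^2$ coefficient $\Theta(\eta)$ and the noise term $O(\eta\epsilon^2)$, telescope over $T=\Theta(\sqrt{d}L(\Delta+\delta L)/(\delta\epsilon^2))$ iterations, and multiply by the per-iteration $\tilde{\calO}(dL/\epsilon)$ query cost from Theorem~\ref{thm:mini-batch}. The only cosmetic difference is that you invoke the exact bias-variance decomposition $\E\|\g_t\|^2 = \|\nabla f_\delta(\x_t)\|^2 + \E\|\g_t-\nabla f_\delta(\x_t)\|^2$, which is slightly tighter than the bound the paper uses, but the resulting $\tilde{\calO}$ complexity is identical.
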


\begin{remark}
    QGFM(Algorithm~\ref{alg:QGFM}) speedups the gradient-free method (GFM) \cite{lin2022gradient} for finding $(\delta,\epsilon)$-stationary point by a factor of $\epsilon^{-1}$.
\end{remark}
Notably, Algorithm~\ref{alg:QGFM} utilized a simple gradient descent step can achieve $\Omega(\delta^{-1}\epsilon^{-3})$, which is optimal for classical zeroth-order and first-order methods in terms of $\epsilon$ and $\delta$. 
It is worth mentioning that the classical methods that achieve this lower bound typically involve multiple loops~\cite{chen2023faster} or rely on additional online optimization algorithms~\cite{cutkosky2023optimal,kornowski2023algorithm}.

\begin{algorithm}[t]
\caption{Fast Quantum Gradient-Free Method (QGFM+ )}\label{alg:QGFM+ }
\begin{algorithmic}[1]
\STATE  Construct $\g_0$ as an unbiased estimator of $\nabla f_{\delta}(\x_0)$ with variance at most $\hat{\sigma}_{1,0}^2$. \\[0.15cm]
\STATE \textbf{for} $t=0,1\dots T$\\[0.15cm]
\STATE \quad $\x_{t+1}=\x_t-\eta{\g}_t$\\[0.15cm]
\STATE \quad Flip a coin $\theta_t\in\{0,1\}$ where $P(\theta_t=1)=p_t$\\[0.15cm]
\STATE \quad \textbf{If} $\theta_t=1$ \textbf{then}\\[0.15cm]
\STATE \quad \quad Construct ${\g}_{t+1}$ as an unbiased quantum estimator of $\nabla f_{\delta}(\x_{t+1})$ with variance at most $\hat{\sigma}_{1,t+1}^2$ using $\U_F$ according to Theorem~\ref{thm:mini-batch}. \\[0.15cm]
\STATE \quad \textbf{else} \\[0.15cm]
\STATE \quad \quad Construct $\Delta{\g}_{t+1}$ as an unbiased quantum estimator of $\nabla f_{\delta}(\x_{t+1})-\nabla f_{\delta}(\x_t)$ with variance at most $\hat{\sigma}_{2,t+1}^2$ using $\U_F$ according to Theorem~\ref{thm:mini-batch}.\\[0.15cm] 
\STATE \quad \quad $\g_{t+1}=\g_t + \Delta{\g}_{t+1}$.\\[0.15cm]
\STATE \textbf{end for}
\end{algorithmic}
\end{algorithm}
To further enhance the query complexity in Theorem~\ref{thm:GFQM}, we propose the fast quantum gradient-free method (QGFM+ ) by incorporating variance reduction techniques, as outlined in Algorithm~\ref{alg:QGFM+ }.
QGFM+ can be seen as a quantum-accelerated version of GFM+\cite{chen2023faster}. Unlike GFM+, which required double loops, QGFM+ simplifies the implementation by utilizing a single loop based on the PAGE framework~\cite{li2021page}.
Moreover, we replace all classical estimators with quantum estimators in line 6 and line 8 of Algorithm~\ref{alg:QGFM+ }. These quantum estimators can be efficiently constructed using stochastic quantum function value oracles with a desired variance level, as demonstrated in Theorem~\ref{thm:mini-batch}. We present the total number of queries of $\U_F$ for QGFM+ in the following theorem.
We present the total queries of $\U_F$ for QGFM+ in the following theorem.

\begin{theorem}
\label{thm:FGFQM}
Under Assumption~\ref{ass:lip}, by setting the parameters in Algorithm~\ref{alg:QGFM+ } as follows
\begin{align*}
  \eta = \delta/(2 d^{1/2 }L),~~~ p_t \equiv \epsilon^{2/3}/L^{2/3},~~~\hat{\sigma}_{1,t}^2\equiv \epsilon^2/{2},~~~\text{and}~~~\hat{\sigma}_{2,t}^2 = \epsilon^{2/3}L^{4/3}d\|\x_{t}-\x_{t-1}\|^2/\delta^2,
\end{align*}
then the total queries of stochastic quantum function value oracle $\U_F$ for finding the $(\delta,\epsilon)$-Goldstein stationary point of $f(\cdot)$ can be bounded by $  \tilde{\calO}\left( d^{3/2} \left(\frac{L^{7/3}}{\epsilon^{7/3}}+\frac{L^{4/3}\Delta}{\delta\epsilon^{7/3}}\right)\right),$
where $\Delta = f(\x_0)-f^*$.
\end{theorem}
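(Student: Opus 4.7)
The plan is to combine a PAGE-style variance recursion for the estimator sequence $\{\g_t\}$ with the standard descent lemma for the smoothed surrogate $f_\delta$, and then pay for each iteration via Theorem~\ref{thm:mini-batch}. I would first set $\zeta_t \triangleq \g_t - \nabla f_\delta(\x_t)$ and $V_t \triangleq \EBP{\|\zeta_t\|^2}$. Conditioning on the history through $\x_{t+1}$ and the coin $\theta_{t+1}$, the fresh branch ($\theta_{t+1}=1$) gives $\EBP{\|\zeta_{t+1}\|^2\mid\cdot}\le \hat\sigma_{1,t+1}^2 = \epsilon^2/2$, while the increment branch writes $\zeta_{t+1}=\zeta_t+\big(\Delta\g_{t+1}-(\nabla f_\delta(\x_{t+1})-\nabla f_\delta(\x_t))\big)$ with the second term mean-zero and independent of $\zeta_t$, so $\EBP{\|\zeta_{t+1}\|^2\mid\cdot}\le \|\zeta_t\|^2 + \hat\sigma_{2,t+1}^2$. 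Substituting $\hat\sigma_{2,t+1}^2=\epsilon^{2/3}L^{4/3}d\eta^2\|\g_t\|^2/\delta^2$ with $\eta=\delta/(2\sqrt d L)$ collapses the prefactor to $\epsilon^{2/3}/(4L^{2/3})$; applying $\|\g_t\|^2\le 2\|\zeta_t\|^2+2\|\nabla f_\delta(\x_t)\|^2$ together with $p_t\equiv p=\epsilon^{2/3}/L^{2/3}$ should yield the clean recursion
\[
V_{t+1}\le \tfrac{p}{2}\epsilon^2 + \big(1-\tfrac{p}{2}\big)V_t + \tfrac{p}{2}\EBP{\|\nabla f_\delta(\x_t)\|^2}.
\]

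Next, since Proposition~\ref{prop:fdelta} gives $L_s=c\sqrt dL/\delta$-smoothness of $f_\delta$, the usual smoothness inequality plus the identity $-\inner{\nabla f_\delta(\x_t)}{\g_t}=-\tfrac12\|\nabla f_\delta(\x_t)\|^2-\tfrac12\|\g_t\|^2+\tfrac12\|\zeta_t\|^2$ together with a step-size $\eta$ slightly below $1/L_s$ will give the descent lemma $\EBP{f_\delta(\x_{t+1})}\le \EBP{f_\delta(\x_t)}-\tfrac{\eta}{2}\EBP{\|\nabla f_\delta(\x_t)\|^2}+\tfrac{\eta}{2}V_t$. I will combine this with the variance recursion through a Lyapunov function $\Phi_t=\EBP{f_\delta(\x_t)}+\alpha V_t$; tuning $\alpha=\Theta(\eta/p)$ so that the $V_t$-coefficient in $\Phi_{t+1}-\Phi_t$ is strictly negative while the $\EBP{\|\nabla f_\delta(\x_t)\|^2}$-coefficient stays bounded away from zero, and then telescoping, delivers
\[
\min_{0\le t<T}\EBP{\|\nabla f_\delta(\x_t)\|^2} \;\le\; \frac{O\big(\Delta/\eta + V_0/p\big)}{T} + O(\epsilon^2).
\]
Choosing $T=\Theta(\Delta\sqrt dL/(\delta\epsilon^2))$ and noting $V_0\le \hat\sigma_{1,0}^2=\epsilon^2/2$ drives the right-hand side to $O(\epsilon^2)$; by Remark~\ref{rmk:equiv} this yields an iterate which is a $(\delta,O(\epsilon))$-Goldstein stationary point of $f$, after which rescaling constants handles $\epsilon$ itself.

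The query count closes the argument: Theorem~\ref{thm:mini-batch} says a fresh estimator at variance level $\hat\sigma_{1,t}^2=\epsilon^2/2$ costs $\tilde\calO(dL/\epsilon)$ calls to $\U_F$, while a difference estimator at variance $\hat\sigma_{2,t+1}^2$ costs $\tilde\calO\bigl(d^{3/2}L\|\x_{t+1}-\x_t\|/(\hat\sigma_{2,t+1}\delta)\bigr)=\tilde\calO(dL^{1/3}/\epsilon^{1/3})$, with the $\|\x_{t+1}-\x_t\|$ factor cancelling by design. The expected cost per iteration is therefore $p\cdot\tilde\calO(dL/\epsilon)+(1-p)\cdot\tilde\calO(dL^{1/3}/\epsilon^{1/3})=\tilde\calO(dL^{1/3}/\epsilon^{1/3})$, so summing over the $T$ steps and absorbing the lower-order initial term $dL/\epsilon\le d^{3/2}L^{7/3}/\epsilon^{7/3}$ into the worst-case bound produces the claimed complexity $\tilde\calO\bigl(d^{3/2}(L^{7/3}/\epsilon^{7/3}+L^{4/3}\Delta/(\delta\epsilon^{7/3}))\bigr)$.

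The main obstacle will be the coupling between the variance recursion and the descent inequality: because $\hat\sigma_{2,t+1}^2$ is proportional to $\|\x_{t+1}-\x_t\|^2=\eta^2\|\g_t\|^2$ rather than to $\eta^2\|\nabla f_\delta(\x_t)\|^2$, the expansion $\|\g_t\|^2\le 2\|\zeta_t\|^2+2\|\nabla f_\delta(\x_t)\|^2$ pushes a $\tfrac{p}{2}\EBP{\|\nabla f_\delta(\x_t)\|^2}$ feedback term back into the variance recursion, and a naive summation of the two inequalities cancels it exactly against the useful decrease in the telescoped descent bound. The cure is to set $\eta$ a small constant factor below $1/L_s$ so that the feedback coefficient shrinks strictly below the descent coefficient, enabling a $1/(1-O(1))$-style rearrangement after introducing $\Phi_t$. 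This parallels the classical PAGE analysis~\cite{li2021page}, except that the per-step oracle cost is now governed by quantum mean estimation (Theorem~\ref{thm:mini-batch}) and the $c\sqrt dL/\delta$-Lipschitz gradient of $f_\delta$ (rather than a dimension-free Lipschitz constant) is precisely what produces the $d^{3/2}$ factor in the final complexity.
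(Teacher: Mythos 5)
Your high-level plan (PAGE-style variance recursion, smoothness descent, Lyapunov $\Phi_t=\EBP{f_\delta(\x_t)}+\alpha V_t$, per-step cost via Theorem~\ref{thm:mini-batch}) matches the paper's architecture, and the per-iteration query counts $b_0=\tilde\calO(dL/\epsilon)$ and $b_1=\tilde\calO(dL^{1/3}/\epsilon^{1/3})$ are computed correctly. But two steps in your sketch do not close, and they are precisely the points where the paper's proof is more careful.

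\emph{Where the Lyapunov analysis breaks.} You push the variance bound $\hat\sigma_{2,t+1}^2\propto\|\x_{t+1}-\x_t\|^2=\eta^2\|\g_t\|^2$ through the loose inequality $\|\g_t\|^2\le 2\|\zeta_t\|^2+2\|\nabla f_\delta(\x_t)\|^2$, and you drop the slack $-(\tfrac{1}{2\eta}-\tfrac{L_\delta}{2})\|\x_{t+1}-\x_t\|^2$ from the descent lemma. Track the coefficients: with $p=\epsilon^{2/3}/L^{2/3}$ and $\eta=1/(2L_\delta)$ your variance recursion becomes $V_{t+1}\le\tfrac{p\epsilon^2}{2}+(1-\tfrac{p}{2})V_t+\tfrac{p}{2}\EBP{\|\nabla f_\delta(\x_t)\|^2}$ while the descent gives $-\tfrac{\eta}{2}\EBP{\|\nabla f_\delta\|^2}+\tfrac{\eta}{2}V_t$. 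To make the $V_t$-coefficient in $\Phi_{t+1}-\Phi_t$ nonpositive you need $\alpha\ge\eta/p$, but then the $\|\nabla f_\delta\|^2$-coefficient becomes $-\tfrac{\eta}{2}+\tfrac{\alpha p}{2}\ge 0$: the window for $\alpha$ is empty exactly at $\eta=1/(2L_\delta)$, which is the step size the theorem prescribes. Your proposed fix --- take $\eta$ ``a small constant factor below $1/L_\delta$'' --- would indeed open up a window, but then you are proving a different theorem than the one stated. The paper avoids this entirely by \emph{not} expanding $\|\g_t\|^2$: it retains $\|\x_{t+1}-\x_t\|^2$ in the one-step descent, sets $\Phi_t=f_\delta(\x_t)-f^*+\tfrac{\eta}{2p}\|\g_t-\nabla f_\delta(\x_t)\|^2$, and shows the aggregate coefficient $A=\tfrac{1}{2\eta}-\tfrac{L_\delta}{2}-\tfrac{\eta(1-p)}{p}\cdot\tfrac{L^{4/3}d\epsilon^{2/3}}{\delta^2}$ is exactly $\ge 0$ at $\eta=1/(2L_\delta)$. (Alternatively you could use the \emph{equality} $\EBP{\|\g_t\|^2}=V_t+\EBP{\|\nabla f_\delta(\x_t)\|^2}$, which follows from conditional unbiasedness, instead of Young's inequality; that also saves the factor of $2$ and makes your route work at $c_0=1/2$, but this is a repair you did not state.)

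\emph{Where the $L^{7/3}/\epsilon^{7/3}$ term actually comes from.} You set $T=\Theta(\Delta\sqrt d L/(\delta\epsilon^2))$ and then try to ``absorb'' the $d^{3/2}L^{7/3}/\epsilon^{7/3}$ term from the initialization cost $dL/\epsilon$. That is not where it comes from: the iterate budget must cover $f_\delta(\x_0)-\inf f_\delta\le\Delta+2\delta L$, and it is the extra $2\delta L$ --- from $|f_\delta-f|\le\delta L$ in Proposition~\ref{prop:fdelta} --- that adds a $\sqrt dL^2\epsilon^{-2}$ piece to $T$ and hence, after multiplying by the per-iteration cost $dL^{1/3}\epsilon^{-1/3}$, produces $d^{3/2}L^{7/3}\epsilon^{-7/3}$. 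The paper's $T=\lceil 8L_\delta\epsilon^{-2}(\Delta+2\delta L)+4/p\rceil$ makes this explicit. Your $T$ omits this term, so the claimed final complexity does not actually follow from your sketch.
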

\begin{remark}
    QGFM+ (Algorithm~\ref{alg:QGFM+ }) speedups the GFM+ \cite{chen2023faster} for finding $(\delta,\epsilon)$-stationary point by a factor of $\epsilon^{-2/3}$.
\end{remark}
We can see that QGFM+ achieves the query complexity of $\tilde{\calO}(d^{3/2}\epsilon^{-7/3}\delta^{-1})$, 
which cannot be achieved by any of the classical methods. 
Furthermore, we observe the applicability of our framework to  \textit{smooth non-convex} optimization.

\begin{remark}\label{rm:apply_to_QSPIDER}
QGFM+ is different from the quantum speedups algorithm (Q-SPIDER) for \textit{non-convex smooth} stochastic optimization \cite{sidford2023quantum}:
QGFM+ adjusts the variance level of $\Delta\g_t$ according to the difference between the current iteration point and the previous one, 
while Q-SPIDER fixed the variance levels.
By using the adaptive variance level and QGFM+ framework, we can further accelerate the Q-SPIDER for \textit{smooth non-convex} optimization.
In Appendix~\ref{sec:improve}, we propose fast quantum 
 gradient method (QGM+) with the queries complexity of
 $\tilde{\calO}(\sqrt{d}\epsilon^{-7/3})$, 
which improves the one of $\tilde{\calO}(\sqrt{d}\epsilon^{-5/2})$ obtained in \citet{sidford2023quantum}.
\end{remark}

\section{Conclusion and Future Work}
\label{sec:conclu}
In this paper, we have presented quantum algorithms for finding the $(\delta,\epsilon)$-Goldstein stationary point for a non-smooth non-convex objective.
Our query complexities demonstrate a clearly quantum speedup over the classical methods. 
In future work, it would be intriguing to explore explicit constructions for quantum sampling oracles under limited auxiliary qubit resources or non-uniform sampling distributions.
It is also interesting to figure out the quantum speedups for the deterministic methods~\cite{tian2024no, jordan2023deterministic,davis2022gradient} or the NS-NC objective with constraints~\cite{liuzeroth}.
We are also interested to see if similar strategies can be applied to the quantum online optimization with zeroth-order feedback~\cite{wan2023quantum,li2022quantum,wu2023quantum,hequantum}.
The query complexity of the proposed methods still have heavy dependency on the dimension, it is also  possible to reduce the dimension dependency based on other quantum techniques and design efficient first-order quantum methods. 

\section*{Acknowledgement}
Chengchang Liu would like to thank Luo Luo and Zongqi Wan for valuable discussion.

\bibliographystyle{plainnat}
\bibliography{ref}
\appendix

\section{Explicit Construction of Quantum Sampling Oracles}\label{sec:quantum_sampling_oracle}

In this section, we propose a novel quantum process to realize quantum sampling oracle $\mO_{\vw,\xi} : |\vzero\> \longmapsto \sum_{\vw,\xi} \sqrt{\Pr[\vw,\xi]}|\vw,\xi\>|\psi_{\vw,\xi}\>$ with uniform distribution, where $\xi$ is uniformly distributed on $\{0,\cdots,N-1\}$ and $\w$ is sampled uniformly on a discrete unit sphere.

The uniform distribution of $\xi$ in quantum state can be constructed by using Hadamard gates.
The construction of the uniform distribution on a discrete unit sphere is more tricky.
Classically, such a distribution can be constructed by sampling each coordinate from a standard Gaussian distribution, and then normalize the vector to have unit length by dividing by its norm.
However, preparing a superposition state with Gaussian amplitude is not trivial because Gaussian distribution is defined in a infinite interval.
Constructing such state with Grover's method \cite{grover2002creating} will lead to some issues in dealing with the domain and normalization of the measurement probability.
Instead, here, starting with the simple uniform superposition state, we use central limit theorem to construct the standard Gaussian distribution.

The overall quantum algorithm proceeds as follows:
 \begin{itemize}
     \item[\underline{\text{Step $1$}}.] Prepare initial quantum state $ \ket{0}^{\otimes m_1} \otimes \ket{0}^{\otimes (dm_2)} \otimes \ket{0}^{\otimes (d\log m_2)}$. Set $k=0$. Apply $ H^{\otimes m_1} \otimes H^{\otimes dm_2} \otimes I$, that is, apply Hadamard gates to the first and the second registers. Here, $m_1, m_2\in \mathbb{N_+}$.
     \item[\underline{Step $2$}.] Define $h:\{0,1\}^{m_2}\to \mathbb{R}$, $h(\vj)=2\sqrt{m_2}\left(\frac{j_1+j_2+\dots+j_{m_2}}{\sqrt{m_2}}-0.5\right)$. Apply $ I \otimes U_h^{\otimes d}$, where $U_h$, the unitary transform corresponding to $h$, maps quantum state $\ket{\vj}\ket{0}$ to quantum state $\ket{\vj}\ket{0 \plus h(\vj)}$. The $k$-th $U_h$ takes the $k$-th $m_2$ qubits in the second register as input, and the output is stored in the $k$-th $\log m_2$ qubits in the third register, for all $k\in \{0,\dots,d-1\}$.
     \item[\underline{Step $3$}.] Consider the third register as a $d$-dimension vector $\vw^{\prime}$, with $\log m_2$ qubits to store each coordinate $\vw^{\prime}_k$. Apply $U_{\text{norm}}:\ket{\vw}\ket{0+\norm{\vw}}$, the result is stored in an additional ancillary register. Then normalize $\vw^{\prime}$ to have unit length by dividing by $\norm{\vw}$ in each component.
     \end{itemize}
\noindent \textbf{Analysis and Correctness.}
In Step $1$, it starts with the quantum state $ \ket{0}^{\otimes m_1} \otimes \ket{0}^{\otimes (dm_2)} \otimes \ket{0}^{\otimes (d\log m_2)}$, all of the registers initial to $0$. The first register is prepared for creating the superposition of $\xi$, and the second and third registers are prepared for creating the superposition $\w$.
We apply Hadamard gates to the first and the second registers, to obtain a uniform superposition of computation basis, which gives
$$\frac{1}{\sqrt{2^{m_1dm_2}}}\sum_{i=0}^{2^{m_1}-1}\ket{i} \otimes \sum_{j^{(0)}_1,\dots,j^{(0)}_{m_2}=0}^1 \ket{j^{(0)}_1\dots j^{(0)}_{m_2}} \otimes \dots \otimes \sum_{j^{(d-1)}_1,\dots,j^{(d-1)}_{m_2}=0}^1 \ket{j^{(d-1)}_1\dots j^{(d-1)}_{m_2}} \otimes \ket{\0}.$$ 
Let $m_1=\lceil\log N\rceil$, and
we relabel the first register to obtain
     $$ \frac{1}{\sqrt{2^{m_1dm_2}}}\sum_{\xi} \ket{\xi} \otimes \sum_{j^{(0)}_1,\dots,j^{(0)}_{m_2}=0}^1 \ket{j^{(0)}_1\dots j^{(0)}_{m_2}} \otimes \dots \otimes \sum_{j^{(d-1)}_1,\dots,j^{(d-1)}_{m_2}=0}^1 \ket{j^{(d-1)}_1\dots j^{(d-1)}_{m_2}} \otimes \ket{\0}. $$

After Step 2, as each $U_h$ operates in the same manner, we take one as an example,
    $$ \frac{1}{\sqrt{2^{m_1dm_2}}}\sum_{\xi} \ket{\xi} \otimes \dots \sum_{j_1,\dots,j_{m_2}=0}^1 \ket{j_1j_2\dots j_{m_2}} \dots \ket{2\sqrt{m_2}\left(\frac{j_1+j_2+\dots+j_{m_2}}{\sqrt{m_2}}-0.5\right)} \dots.$$

Once measure, $j_1,\dots ,j_{m_2}$ are independent and identically distributed random variables with mean of $0.5$ and variance of $0.25$.  
By the central limit theorem, the average of $\{j_i\}_{i=1}^{m_2}$ approximates the Gaussian distribution when $m_2$ is large.
We obtain the standard Gaussian distribution after shifting and scaling the average of them. 
We denote $\vw_k^{\prime}\triangleq 2\sqrt{m_2}\left(\frac{j^{(k)}_1+j^{(k)}_2+\dots+j^{(k)}_{m_2}}{\sqrt{m_2}}-0.5\right)$, 
then the measurement results of $\sum_{\vj^{(k)}} \ket{\vw_k^{\prime}}$ follow the distribution of $\fN(0,1)$.

After Step 3, the vector in the third register is mapped into the unit sphere, which follows the uniform distribution on a discrete unit sphere. Rearrange the order of the registers, denote all the garbage qubits as $\ket{\psi_{\vw,\xi}}$, we obtain
    $$ \sum_{\vw,\xi} \sqrt{\Pr[\vw,\xi]}\ket{\vw,\xi}\ket{\psi_{\vw,\xi}},$$
where $\vw=\vw^{\prime}/\norm{\vw^{\prime}}$ is sampled uniformly on a discrete unit sphere and $\xi$ is uniformly distributed on $\{0,\cdots,N-1\}$.
This realizes the discrete version of the quantum sample oracle $\mO_{\vw,\xi}$ with uniform distribution.

\begin{remark}
In the ideal scenario where we do not need to limit the number of qubits, allowing $m_1$ and $m_2$ to be sufficiently large, we can achieve $\int_{\vw \in S^{d-1}} \sqrt{\mu(\vw) d\vw}|\vw\>$ as needed in Proposition~\ref{prop:g_delta_property}. 
Specifically, our process requires $m_1 + m_2 \times d$ Hadamard gates, $\calO(dm_2)$ fundamentally arithmetic operations and $1$ calls to norm circuits. 
Here, $m_1 = \lceil \log N \rceil$ and $m_2$ is the number of random variable we used to approximate the Gauss distribution.
Note that many gates here can be performed in parallel, for example, all the $H$ gates can be performed simultaneously, and the sum of $m_2$ variables can be implemented in a depth $\calO(\log m_2)$ circuit. 
The total depth complexity is $\calO(\log m_2+\log d)$, which indicates that the circuit depth will remain small when $m_2$ increases,.
This ensures that our construction is feasible even under the context of NISQ quantum computer \cite{preskill2018quantum}, which only supports low depth circuits.
Notably, this procedure does not require querying $\mU_F$, thereby not increasing the query complexity of $\mU_F$. 
This kind of ideal scenario is reasonable as considered in previous work 
\cite{sidford2023quantum,childs2022quantumB} since we focus only on the query complexity of $\mU_F$.
\end{remark}


\begin{remark}
If $\xi$ is sampled from distribution other than uniform distribution, 
there still exist quantum techniques which can construct such quantum sample oracle. 
When the detailed classical sampling circuits are known,
we can make it reversible by replacing gates in the classical circuits with reversible quantum gates such as Toffoli gate \cite{nielsen2010quantum}, 
so as to obtain a quantum circuit~\cite{wang2021quantum}.   
When there is only a black box access to the classical circuit, we discuss the construction by cases. 
For the continues case where the distribution is described by a probability density function, we can use the Grover's method~\cite{grover2002creating}, which requires an efficient integrating circuit. 
For the discrete case, we can extend the Grover's method by the QRAM data structure. 
The complexity of constructing QRAM data structure depends linearly on the size of sample space.
Once it is constructed, the complexity of generating the quantum sample oracle depends only logarithmly on the size of sample space \cite{kerenidis2017quantum}.
\end{remark}

\section{The Proof of Lemma~\ref{lm:O_g_delta}}
\begin{proof}
First, we claim that a unitary operator $\mU_{\vg,{\delta}}$ for computing the stochastic gradient estimator $\vg_\delta(\cdot ;\w, \xi)$ can be efficiently constructed. More precisely, we can construct 
\begin{equation}\label{eq:U_g_delta}
    \mU_{\vg, \delta}: |\vx\> \otimes |\xi\> \otimes |\vw\> \otimes |b\> \longmapsto |\vx\> \otimes |\vg_\delta(\x;\w,\xi)\> \otimes |\psi_{\vw, \xi}\>
\end{equation}
with $2$ queries to $\mU_{F}$. Now we assume the access to $\mU_{\vg, \delta}$ and the description of its construction will be deferred to the end of this proof. Next we show how this can lead to a quantum $\delta$-estimated stochastic gradient oracle $\mO_{\vg, \delta}$ as defined in \Cref{def:quantum_delta_gradient}. Given initial state $|\vx\> \otimes |\vzero\> \otimes |\vzero\>$, we can prepare the desired quantum state by first applying the quantum sampling oracle $\mO_{\vw,\xi}$ and then $\mU_{\vg, \delta}$ as follows:
\begin{align*}
\mU_{\vg, \delta} \cdot (\mI \otimes \mO_{\xi, \vw} \otimes \mI)|\vx\> \otimes |\vzero\> \otimes |\vzero\>  
&=\mU_{\vg, \delta} ( |\vx\> \otimes \sum_{\xi, \vw} \sqrt{p(\xi, \vw)}|\xi, \vw\> \otimes |\vzero\>) \\
&=  \sum_{\xi, \vw} \sqrt{p(\xi, \vw)} \mU_{\vg, \delta} ( |\vx\> \otimes |\xi, \vw\> \otimes |\vzero\>) \\
&= |\vx\> \otimes \sum_{\xi, \vw} \sqrt{p(\xi, \vw)} |\vg_\delta(\x;\w,\xi)\> \otimes |\psi_{\vw, \xi}\>.
\end{align*}

Next we finish the proof by presenting how to implement $\mU_{\vg, \delta}$ with two queries to $\mU_F$. Since $\delta$ and $d$ are fixed and known beforehand, we can easily construct the following three operators via the quantum unitary implementations of the corresponding classical arithmetic operations:
\begin{align*}
    &\mA_+ : |\vx\> \otimes |\vw\> \otimes |\vzero\> \longmapsto |\vx\> \otimes |\vw\> \otimes |\vx + \delta \vw\>,~~~~\mA_- : |\vx\> \otimes |\vw\> \otimes |\vzero\> \longmapsto |\vx\> \otimes |\vw\> \otimes |\vx - \delta \vw\>, \\
    &\mathsf{sub} : |a \> \otimes |b\> \longmapsto |a\> \otimes | a- b\>,~~~~~~~\text{and}~~~~~~~~\mathsf{Fmul} : |c \> \longmapsto \ket{\frac{\delta}{2d} c}.
\end{align*}
Let $F'(\vx; \vw, \xi)\triangleq \frac{\delta}{2d} \left( F(\x+\delta\w;\xi)-F(\x-\delta\w;\xi) \right)$. Then we construct a unitary $\mD$ as follows:
\begin{equation}\label{eq:unitaryD}
\begin{split}
    \mD : &\quad |\vx\>  \otimes |\xi\> \otimes |\vw\> \otimes |\vzero \> \otimes |\vzero \> \otimes |\vzero \> \otimes |\vzero \> \\
    &\mapsto_{(a)} |\vx\>  \otimes |\xi\> \otimes |\vw\> \otimes |\vzero\> \otimes |\vzero \> \otimes |\vx - \delta \vw \> \otimes |\vzero \>  \\
    &\mapsto_{(b)} |\vx\>  \otimes |\xi\> \otimes |\vw\> \otimes |F(\vx - \delta \vw;\xi) \> \otimes |\vzero \> \otimes  |\vx - \delta \vw \> \otimes |\vzero \>  \\
    &\mapsto_{(c)} |\vx\>  \otimes |\xi\> \otimes |\vw\> \otimes |F(\vx - \delta \vw;\xi) \>  \otimes |F(\vx + \delta \vw;\xi)\> \otimes |\vx - \delta \vw \>  \otimes |\vx + \delta \vw\>   \\
    &\mapsto_{(d)} |\vx\>  \otimes |\xi\> \otimes |\vw\>  \otimes |F'(\vx; \vw, \xi) \> \otimes |F(\vx + \delta \vw;\xi) \> \otimes |\vx - \delta \vw \> \otimes |\vx + \delta \vw\> \\
    &= |\vx\>  \otimes |\xi\> \otimes |\vw\>  \otimes |F'(\vx; \vw, \xi) \> \otimes \ket{\psi'_{\vw, \xi}},    
\end{split}
\end{equation}
where $(a)$ follows by applying $\mA_-$ on the first, third and sixth registers; $(b)$ uses the quantum stochastic function value oracle $\mU_F$ on the second, fourth and sixth registers; $(c)$ uses $\mA_+$ and $\mU_F$ in a way similar to steps $(a)$ and $(b)$; $(d)$ applies $\mathsf{sub}$ on the fourth and fifth registers, and then applies $\mathsf{Fmul}$ on the fourth register. It is easy to see that this unitary $\mD$ uses only $2$ queries to $\mU_F$.

For any input state $|\vx\>  \otimes |\vw, \xi\> \otimes |0\> \otimes |0\>^{\otimes d}$, apply $\mD$ to obtain
\begin{equation}\label{eq:applyD}
    |\vx\>  \otimes |\xi\> \otimes |\vw\> \otimes |F'(\vx; \vw, \xi)\> \otimes |\psi'_{\vw, \xi}\> \otimes |0\>^{\otimes d}
    =
    |\vx\>  \otimes |\xi\> \otimes |w_1, \cdots, w_d\>  \otimes |F'(\vx; \vw, \xi)\> \otimes |\psi'_{\vw, \xi}\> \otimes |0\>^{\otimes d}
\end{equation}
Next we will utilize quantum multiplication operator $\mU_{\mathsf{mul}}: |a\> \otimes |b\> \otimes |c\> \longrightarrow |a\> \otimes |b\> \otimes |c \oplus ab\>$. This can be implemented by the quantization of classical multiplication algorithms, whose details can be found in~\cite{RG17,Gid19,RNF+23}.

Applying $\mU_{\mathbf{mul}}$ 
to each $|w_i, F'\> \otimes |0\>$ for all $i \in [d]$ yields
\begin{equation}\label{eq:g_delta}
\begin{split}
    &\quad |\vx\> \otimes |\xi\> \otimes |w_1, \cdots, w_d\> \otimes |F'(\vx + \delta \vw; \xi)\> \otimes |\psi'_{\vw, \xi}\> \otimes |F'(\vx + \delta \vw; \xi) w_1, \cdots, F'(\vx + \delta \vw; \xi) w_d\> \\
    &= |\vx\> \otimes |\xi\> \otimes |w_1, \cdots, w_d\> \otimes |F'(\vx + \delta \vw; \xi)\> \otimes |\psi'_{\vw, \xi}\> \otimes |F'(\vx + \delta \vw; \xi) \vw\> \\
    &=  |\vx\> \otimes |\xi\> \otimes |w_1, \cdots, w_d\> \otimes |F'(\vx + \delta \vw; \xi)\> \otimes |\psi'_{\vw, \xi}\> \otimes |\vg_\delta(\x;\w,\xi)\> \\
    &= |\vx\>  \otimes |\psi_{\vw, \xi}\> \otimes |\vg_\delta(\x;\w,\xi)\>.
\end{split}
\end{equation}
By swapping the last two quantum registers, we obtain $ |\vx\> \otimes |\vg(\x;\w,\xi)\> \otimes |\psi_{\vw, \xi}\>.$ Hence, $\mU_{\vg, \delta}$ can be implemented with two queries to $\mU_F$.
\end{proof}

\section{The Proof of Corollary~\ref{cor:est_deltag}}\label{app:est_deltag}

\begin{proof}
Analogous to \cref{eq:U_g_delta}, we claim that the following unitary $\mV_{\vg,\delta}$ can be implemented with 4 queries to $\mU_F$:
\[
\mV_{\vg,\delta}: |\vx\> \otimes |\vy\> \otimes |\xi\> \otimes |\vw\> \otimes |b\> \longmapsto |\vx\> \otimes |\vy\> \otimes |\g_{\delta}(\x;\w,\xi)-\g_{\delta}(\y;\w,\xi)\> \otimes |\psi_{\vw, \xi}\>. 
\]
With access to $\mV_{\vg,\delta}$ and $\mO_{\vg, \delta}$, we can construct $\mO_{\Delta\vg_\delta}$ as
\[
\mO_{\Delta\vg_\delta}  = \mV_{\vg,\delta} \cdot (\mI \otimes \mI \otimes \mO_{\xi, \vw} \otimes \mI) . 
\]
Next, to implement $\mV_{\vg,\delta}$ with 4 queries to $\mU_F$, we can first follow the steps in \cref{eq:unitaryD}, \cref{eq:applyD} and \cref{eq:g_delta} to get a unitary that performs the mapping below
\[
|\vx\> \otimes |\vy\> \otimes |\xi\> \otimes |\vw\> \otimes |\vzero \> \otimes |\vzero \> \otimes |\vzero \> \otimes |\vzero \>
\longmapsto
 |\vx\>  \otimes |\vy\> \otimes |\psi_{\vw, \xi}\> \otimes |\vg_\delta(\x;\w,\xi)\> \otimes |\vg_\delta(\vy;\w,\xi)\>.
\]
Then applying $\mathsf{sub}$ and a $\mathsf{SWAP}$ gate to the output above yields
\[
|\vx\> \otimes|\vy\> \otimes \sum_{\xi, \vw} \sqrt{\Pr[\vw, \xi]} |\g_{\delta}(\x;\w,\xi)-\g_{\delta}(\y;\w,\xi)\> \otimes |\psi_{\vw, \xi}\> . 
\]
\end{proof}

\section{The Proof of Theorem~\ref{thm:mini-batch}}
Before we present the proof, we first introduce the results for the quantum mean estimation by~\citet{sidford2023quantum}.
\begin{theorem}[{\cite[Theorem 4]{sidford2023quantum}}]
\label{thm:random-est}
   For a random variable $X$ with bounded variance such that $\var{X}\leq \hat{L}^2$, there exists an algorithm that can output an unbiased estimator $\hat{\mu}$ of $\mu=\EBP{X}$ satisfying $\EBP{\|\hat{\mu}-\mu\|^2}\leq \hat{\sigma}^2$ using an expected $\tilde{O}(\hat{L}\sqrt{d}\hat{\sigma}^{-1})$ queries of quantum sampling oracle $\mO_X$ as defined in Definition~\ref{def:quantum_sample_O}.
\end{theorem}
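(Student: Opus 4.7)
I would prove this by assembling two ingredients: an inner multivariate biased quantum mean estimator (such as the Cornelissen--Hamoudi--Jerbi routine) that achieves additive $\ell_2$ error $\eps$ with high probability in $\tilde{\calO}(\hat{L}\sqrt{d}/\eps)$ queries of $\mO_X$, and an outer Blanchet--Glynn style randomized debiasing (multilevel Monte Carlo) layer that converts the biased family into a single unbiased estimator without asymptotically inflating the expected query cost. The target complexity $\tilde{\calO}(\hat{L}\sqrt{d}\hat{\sigma}^{-1})$ matches exactly the cost of the inner routine run at the noise level $\hat{\sigma}$, so the role of the outer layer is essentially to pay only a constant multiplicative overhead.

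First I would package the inner estimator as a subroutine: given $\mO_X$ with $\var{X}\leq \hat{L}^2$, one can produce $\hat{\mu}(\eps,\delta)$ satisfying $\|\hat{\mu}(\eps,\delta)-\mu\|\leq \eps$ with probability at least $1-\delta$ using $\tilde{\calO}(\hat{L}\sqrt{d}\log(1/\delta)/\eps)$ queries, by boosting a constant-success CHJ-type routine via median-of-means in the $\ell_2$ sense. Then I would define a geometric schedule $\eps_k=\hat{\sigma}\cdot 2^{-k}$ and independent estimators $\hat{\mu}_k \triangleq \hat{\mu}(\eps_k,\delta_k)$ with $\delta_k$ chosen polynomially small in $2^{-k}$, each \emph{clipped} to a ball of radius $\calO(\hat{L})$ around a coarse reference (obtained by a single inexpensive call at accuracy $\hat{L}$). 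Drawing a random level $K$ with $\Pr[K=k]=(1-q)q^k$ for $q=\tfrac{1}{2}$, I would output
\begin{align*}
\hat{\mu} \;=\; \hat{\mu}_0 \;+\; \frac{\hat{\mu}_K-\hat{\mu}_{K-1}}{q^K},
\end{align*}
which telescopes in expectation to $\lim_k \EBP{\hat{\mu}_k}=\mu$ and therefore is unbiased.

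Finally I would verify the second moment and the expected query cost. On the good event $\|\hat{\mu}_K-\hat{\mu}_{K-1}\|=\calO(\eps_{K-1})$, so $\EBP{\|\hat{\mu}-\mu\|^2}$ is bounded by $\sum_k \eps_k^2/p_k$, a geometric series collapsing to $\calO(\hat{\sigma}^2)$ upon balancing; simultaneously the expected query count $\sum_k p_k\cdot \tilde{\calO}(\hat{L}\sqrt{d}/\eps_k)$ telescopes to $\tilde{\calO}(\hat{L}\sqrt{d}/\hat{\sigma})$ because the geometric decay in $p_k$ exactly cancels the geometric growth in $1/\eps_k$. The hardest part will be controlling the low-probability failure branches of the inner routine: a single bad realization of $\hat{\mu}_K$ divided by $q^K$ could in principle destroy the moment bound. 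The clipping step and the polynomial choice of $\delta_k$ are tailored so that the failure contribution $\sum_k \delta_k\,\hat{L}^2/p_k$ remains summable at level $\calO(\hat{\sigma}^2)$, while the polylogarithmic $\log(1/\delta_k)$ factors arising from boosting are absorbed into the $\tilde{\calO}$ notation.
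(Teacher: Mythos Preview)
The paper does not supply its own proof of this statement: Theorem~\ref{thm:random-est} is simply quoted from \cite[Theorem~4]{sidford2023quantum} as a black box, and is immediately applied (with $\hat L=\sqrt{d}L$ and $\hat L=dL\delta^{-1}\|\x-\y\|$) to derive Theorem~\ref{thm:mini-batch}. So there is no in-paper argument to compare against; your plan is in effect a reconstruction of the cited result, and the high-level architecture you describe --- a biased high-probability multivariate quantum mean estimator in the style of Cornelissen--Hamoudi--Jerbi, wrapped in a randomized multilevel (Blanchet--Glynn/Rhee--Glynn) debiasing layer --- is indeed the strategy used in \cite{sidford2023quantum}.

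That said, there is a concrete slip in your parameter choice and in the accompanying reasoning. With $p_k=(1-q)q^k$, $q=\tfrac12$, and per-level cost $C_k=\tilde{\calO}(\hat L\sqrt d/\eps_k)=\tilde{\calO}(\hat L\sqrt d\,2^{k}/\hat\sigma)$, the expected query count is
\[
\sum_{k\geq 0} p_k\,C_k \;\asymp\; \frac{\hat L\sqrt d}{\hat\sigma}\sum_{k\geq 0}(2q)^k,
\]
which \emph{diverges} at $q=\tfrac12$. Your sentence ``the geometric decay in $p_k$ exactly cancels the geometric growth in $1/\eps_k$'' is precisely the problem: exact cancellation leaves a constant summand at every level, hence an infinite expected cost, not a finite one hidden by $\tilde{\calO}$. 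The variance side, $\sum_k \eps_{k-1}^2/p_k\asymp \hat\sigma^2\sum_k (4q)^{-k}$, is finite for $q>\tfrac14$, so the admissible window is $q\in(\tfrac14,\tfrac12)$; any such $q$ simultaneously makes the expected cost $\calO(\hat L\sqrt d/\hat\sigma)$ and the second moment $\calO(\hat\sigma^2)$. (The $\log(1/\delta_k)$ boosting factors only push you further into needing $q<\tfrac12$ strictly.) With that correction the plan goes through; the clipping-plus-small-$\delta_k$ device you outline for the failure branches is the right mechanism, and is exactly what the cited reference uses to keep the bad-event contribution to the second moment below $\hat\sigma^2$.
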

\begin{proof}
According to Proposition~\ref{prop:g_delta_property}, the quantum $\delta$-estimated stochastic gradient oracle given the input $\x$ satisfies that
\begin{align*}
    \EBP{\g_{\delta}}=\nabla f_{\delta}(\x)~~~\text{and}~~~\var{\g_{\delta}} \leq 16\sqrt{2}\pi dL^2.
\end{align*}
Using Theorem~\ref{thm:random-est} with $\hat{L} = \sqrt{d}L$, it requires only $\tilde{\calO}(dL\hat{\sigma}_1^{-1})$ queries of $\mO_{\g_\delta}$ to construct the quantum estimator $\hat{\g}$ such that $ \EBP{\|\hat{\g}-\nabla f_{\delta}(\x)\|^2}\leq \hat{\sigma}_1^{2}$.
According to Lemma~\ref{lm:O_g_delta}, we can construct each $\mO_{\g_\delta}$ by $\calO(1)$-queries of $\U_F$. Thus, it only requires $\tilde{\calO}(dL\hat{\sigma}_1^{-1})$ queries of $\U_F$ to construct the mini-batch quantum estimator $\hat{\g}$.

Similarly, since we can construct the quantum estimator $\Delta\g_{\delta}$ by $\calO(1)$-queries of $\U_F$ according to Corollary~\ref{cor:est_deltag}, with the following properties
\begin{align*}
    \EBP{\Delta\g_{\delta}}=\nabla f_{\delta}(\x)-\nabla f_{\delta}(\y)~~~\text{and}~~~\var{\Delta\g_{\delta}} \leq \EBP{\|\Delta\g_{\delta}\|^2}\leq d^2L^2\delta^{-2}\|\x-\y\|^2,
\end{align*}
then, using Theorem~\ref{thm:random-est} with $\hat{L} = dL\delta^{-1}\|\x-\y\|$ directly leads to second statement.
\end{proof}

\section{The Proof of Theorem~\ref{thm:GFQM}}
\begin{proof}
According to the variance level we set, $\g_t$ satisfies that 
\begin{align*}
    \EBP{\|\g_t-\nabla f_{\delta}(\x_t)\|^2}\leq \frac{\epsilon^2}{2}.
\end{align*}
According to Proposition~\ref{prop:fdelta}, $f_{\delta}(\cdot)$ is a nonconvex function, with $(\sqrt{d}L\delta^{-1})$-Lipschitz gradient, which implies that
\begin{align*}
    f_{\delta}(\x_{t+1})&\leq f_{\delta}(\x_t) + \inner{\nabla f_{\delta}(\x)}{\x_{t+1}-\x_t} + \frac{\sqrt{d}L\delta^{-1}}{2}\|\x_{t+1}-\x_t\|^2 \\
    &=  f_{\delta}(\x_t) - \eta \inner{\nabla f_{\delta}(\x)}{\g_t }+ \frac{\sqrt{d}L\delta^{-1}}{2}\|\x_{t+1}-\x_t\|^2
\end{align*}
Taking expectation on both sides of the above inequality, we have
\begin{align*}
   f_{\delta}(\x_{t+1})&\leq f_{\delta}(\x_t) -\eta \|\nabla f_{\delta}(\x_t)\|^2+  \frac{\eta^2\sqrt{d}L\delta^{-1}}{2}\EBP{\|\g_t\|^2}\\
    & \leq  f_{\delta}(\x_t) -\eta \|\nabla f_{\delta}(\x_t)\|^2 + \eta^2\sqrt{d}L\delta^{-1}\left(\|\nabla f_{\delta}(\x_t)\|^2+\EBP{\|\g_t-\nabla_{\delta}(\x_t)\|^2}\right)\\
    & \leq f_{\delta}(\x_t) -\left(\eta-\sqrt{d}L\delta^{-1}\eta^2\right)\|\nabla f_{\delta}(\x_t)\|^2  + \sqrt{d}L\delta^{-1}\eta^2\cdot\frac{\epsilon^2}{2},
\end{align*}
We let $\eta = \frac{\delta}{2\sqrt{d}L}$, then it holds that
\begin{align*}
      \EBP{\|\nabla f_{\delta} (\x_t)\|^2} \leq  2\sqrt{d}L\delta^{-1}\left(f_{\delta}(\x_t)-f_{\delta}(\x_{t+1}) \right)+ \frac{\epsilon^2}{4}.
\end{align*}
Summing up the above inequality, we have
\begin{align*}
    \EBP{\frac{\sum_{t=0}^{T}\|\nabla f_{\delta}(\x_t)\|^2}{T}}\leq \frac{2\sqrt{d}L\delta^{-1}(f_{\delta}(\x_0)-f_{\delta}^*)}{T} +\frac{\epsilon^{2}}{4} \leq \frac{2\sqrt{d}L\delta^{-1} (f(x_0)-f^*+2\delta L)}{T} +\frac{\epsilon^2}{4}.
\end{align*}
By setting
\begin{align*}
    T=\left\lceil2\epsilon^{-2} (4\sqrt{d}L^2 + 2\sqrt{d}L \delta^{-1}\Delta)\right\rceil,
\end{align*}
and choosing $\x_{\text{out}}$ randomly from $\{\x_0,\cdots,
\x_{T-1}\}$, we have
\begin{align*}
    \EBP{\|\nabla f_{\delta}(\x_{\text{out}})\|^2}\leq \frac{1}{T}\EBP{\sum_{i=1}^T\|\nabla f_{\delta}(\x_t)\|^2}\leq \frac{\epsilon^2}{4} + \frac{\epsilon^2}{2}\leq \epsilon^2.
\end{align*}
Using Theorem~\ref{thm:mini-batch}, we require 
\begin{align*}
    b=\tilde{\calO}(dL\epsilon^{-1}),
\end{align*}
to achieve the desired variance level.
Thus the total quantum query of $\U_F$ can be bounded by
\begin{align*}
    b\cdot T = \tilde{\calO}\left(d^{3/2}\left(\frac{L\Delta}{\epsilon^{3}\delta}+\frac{L^2}{\epsilon^{3}}\right)\right).
\end{align*}

\end{proof}

\section{The Proof of Theorem~\ref{thm:FGFQM}}
\begin{proof}

We denote $L_{\delta} \triangleq \frac{\sqrt{d}L}{\delta}$.
We also denote $\hat{\g}_{t+1}$ as the unbiased estimator of $\nabla f_{\delta}(\x_{t+1})$ we have constructed in line 6 and $\Delta\g_{t+1}$ as the unbiased estimator of $\nabla f_{\delta}(\x_{t+1})-\nabla f_{\delta}(\x_t)$ we have constructed in line 8.
We can see that $\g_{t+1}$ is equivalent to
\begin{align*}
\g_{t+1} = \begin{cases}
    &\hat{\g}_{t+1}~~~~~~~~~~~~~~~~\text{with probability $p_t$}\\
    &\g_t+\Delta \g_{t+1}~~~~~\text{with probability $1-p_t$}
\end{cases}.    
\end{align*}

According to the variance level we set in Theorem~\ref{thm:FGFQM}, we have
\begin{align*}
   \EBP{\|\hat{\g}_{t+1}-\nabla f_{\delta}(\x_{t+1})\|^2}  \leq \hat{\sigma}_{1,t+1}^2 = \frac{\epsilon^2}{2},
\end{align*}
and
\begin{align*}
   \EBP{\|\Delta\g_{t+1}-(\nabla f_{\delta}(\x_{t+1})-\nabla f_{\delta}(\x_t))\|^2}  \leq \hat{\sigma}_{2,t+1}^2 = \epsilon^{2/3}\|\x_{t+1}-\x_t\|^2 \frac{L^{4/3}d}{\delta^2}.
\end{align*}
According to Proposition~\ref{prop:fdelta}, $\nabla f_{\delta}(\cdot)$ is $L_\delta$-Lipschitz continuous, which means
\begin{align}
\label{eq:f_delta_bound}
\begin{split}
    f_{\delta}(\x_{t+1})&\leq f_{\delta}(\x_t) + \inner{\nabla f_{\delta}(\x_t)}{\x_{t+1}-\x_t} + \frac{L_{\delta}}{2}\|\x_{t+1}-\x_t\|^2\\
    & = f_{\delta}(\x_t) + \inner{\nabla f_{\delta}(\x_t)-{\g}_t}{\x_{t+1}-\x_t}+\inner{{\g}_t}{\x_{t+1}-\x_t} + \frac{L_{\delta}}{2}\|\x_{t+1}-\x_t\|^2\\
    &\leq  f_{\delta}(\x_t)-\frac{\eta}{2}\|\nabla f_{\delta}(\x_t)\|^2-\frac{\eta}{2}\|{\g}_t-\nabla f_{\delta}(\x_t)\|^2 -\left(\frac{1}{2\eta}-\frac{L_{\delta}}{2}\right)\|\x_{t+1}-\x_{t}\|^2.
\end{split}
\end{align}
On the other hand, we track the variance of ${\g}_{t+1}$ by
\begin{align}
\label{eq:g_bound}
\begin{split}
  &\EBP{\|{\g}_{t+1}-\nabla f_{\delta}(\x_{t+1})\|^2}\\
    & = p_t\EBP{\|\hat{\g}_{t+1}-\nabla f_{\delta}(\x_{t+1})\|^2}\\
    &~~~~~~~~~~~~+(1-p_t)\EBP{\|\g_t-\nabla f_{\delta}(\x_t) + (\Delta\g_{t+1} - (\nabla f_{\delta}(\x_{t+1})-\nabla f_{\delta}(\x_t)))\|^2}\\
    & = p_t \epsilon^2 + (1-p_t) \|\g_t-\nabla f_{\delta}(\x_{t})\|^2 + (1-p_t)\cdot \frac{L^{4/3}\epsilon^{2/3}d}{\delta^{2}}\|\x_{t+1}-\x_t\|^2.
\end{split}
\end{align}

We have $p_t \equiv p$ and can denote $\Phi_t \triangleq f_{\delta}(\x_t)-f^*+\frac{\eta}{2p}\|\g_t-\nabla f_{\delta}(\x_t)\|^2$. Combining \cref{eq:f_delta_bound} and \cref{eq:g_bound}, we have
\begin{align}
\label{eq:phi_update}
\begin{split}
    \EBP{\Phi_{t+1}}& =\EBP{f_{\delta}(\x_{t+1})+\frac{\eta}{2p}\|\g_{t+1}-\nabla f(\x_{t+1})\|^2}\\
&\leq \EBP{f_{\delta}(\x_t)-\frac{\eta}{2}\|\nabla f_{\delta}(\x_t)\|^2-\frac{\eta}{2}\|\g_t-\nabla f_{\delta}(\x_t)\|^2 -\left(\frac{1}{2\eta}-\frac{L_{\delta}}{2}\right)\|\x_{t+1}-\x_{t}\|^2} \\
&~~~~~~~~~+ \frac{\eta}{2p}\EBP{p \epsilon^2+ (1-p) \|\g_t-\nabla f_{\delta}(\x_{t})\|^2 + (1-p)\frac{L^{4/3}d\epsilon^{2/3}}{\delta^2}\|\x_{t+1}-\x_t\|^2}
\\&\leq \EBP{\Phi_t} -\frac{\eta}{2}\|\nabla f_{\delta}(\x_t)\|^2 - \underbrace{\left(\frac{1}{2\eta}-\frac{L_{\delta}}{2}-\frac{\eta(1-p)}{p}\cdot \left(\frac{L^{4/3}d\epsilon^{2/3}}{\delta^2}\right)\right)}_{A}\|\x_{t+1}-\x_t\|^2 + \frac{\eta\epsilon^2}{2}.
\end{split}
\end{align}
We have chosen
\begin{align}
\label{eq:para_choose}
    \eta = \frac{1}{2L_{\delta}}~~\text{and}~~ p = \frac{\epsilon^{2/3}}{L^{2/3}},
\end{align}
such that
\begin{align*}
    A \geq \frac{\sqrt{d}L}{2\delta} - \frac{\delta}{2\sqrt{d}L}\cdot\frac{L^{2/3}}{\epsilon^{2/3}}\cdot\frac{L^{4/3}d\epsilon^{2/3}}{\delta^2} = 0.
\end{align*}
Then, \cref{eq:phi_update} implies
\begin{align}
\label{eq:proof-mid}
    \EBP{\|\nabla f_{\delta}(\x_t)\|^2}&\leq \frac{2}{\eta}\EBP {\Phi_{t}-\Phi_{t+1}} + \epsilon^2.
\end{align}
Since it holds that
\begin{align*}
  \frac{2}{\eta}  \EBP{\Phi_0-\Phi_T}
    &\leq\frac{2}{\eta}\EBP{f_{\delta}(\x_0) -f_{\delta}^* +\frac{\eta}{2p}\|\hat{\g}_0-\nabla f(\x_0)\|^2 } \\
    &\leq \frac{2}{\eta}\EBP{f(\x_0) -f^* +2\delta L+\frac{\eta}{2p}\|\hat{\g}_0-\nabla f(\x_0)\|^2 }  \\
    &\leq \frac{2}{\eta}\left(\Delta + 2\delta L\right) + \frac{1}{p}\epsilon^{2},
\end{align*}
summing up \cref{eq:proof-mid} from $t=0,\cdots, T-1$, we have
\begin{align*}
    \frac{1}{T}\sum_{i=0}^{T-1}\EBP{\|\nabla f_{\delta}(\x_i)\|^2}\leq \frac{2}{\eta T}\EBP{\Phi_0-\Phi_T} + \frac{\epsilon^2}{2}.
\end{align*}
By choosing
\begin{align}
\label{eq:bound_T}
    T =\left\lceil8L_{\delta}\epsilon^{-2} \left(\Delta+2\delta L\right) + \frac{4}{p}\right\rceil ,
\end{align}
we have
\begin{align*}
    \EBP{\|\nabla f_{\delta}(\x_{\text{out}})\|^2}=\frac{1}{T}\sum_{i=0}^{T-1}\EBP{\|\nabla f_{\delta}(\x_i)\|^2}&\leq \frac{2}{\eta T} \EBP{\Phi_0-\Phi_T} + \frac{\epsilon^2}{2}\leq \frac{\epsilon^2}{4}+\frac{\epsilon^2}{4}+ \frac{\epsilon^{2}}{2}=\epsilon^2.
\end{align*}
Using Theorem~\ref{thm:mini-batch}, the expectation queries of $\U_F$ to construct $\hat{\g}_t$ is 
\begin{align*}
   b_0 = \tilde{\calO}\left(dL\hat{\sigma}_{1,t}^{-1}\right) =\tilde{\calO}\left(dL\epsilon^{-1}\right),
\end{align*}
and the expectation queries of $\U_F$ to construct $\Delta\g_t$ is
\begin{align*}
   b_1 = \tilde{\calO}\left(d^{3/2}L\|\x_{t-1}-\x_t\|\hat{\sigma}_{2,t}^{-1}\delta^{-1}\right) =\tilde{\calO}\left(dL^{1/3}\epsilon^{-1/3}\right).
\end{align*}
Thus, the total quantum queries of $\U_F$ for finding the $(\delta,\epsilon)$-stationary point of $f(\cdot)$ can be bounded by
\begin{align*}
    \tilde{\calO}(T(b_0 p+b_1(1-p))) &= \tilde{\calO}\left(\sqrt{d}L\epsilon^{2}(\Delta + 2\delta L)\cdot\left(dL\epsilon^{-1} L^{-2/3}\epsilon^{2/3}+dL^{1/3}\epsilon^{-1/3}\right)\right)\\
    &=\tilde{\calO}\left(d^{3/2}\left(\frac{L^{4/3}\Delta}{\epsilon^{7/3}\delta}+\frac{L^{7/3}}{\epsilon^{7/3}}\right)\right),
\end{align*}
which finishes the proof.

\end{proof}

\section{Improved Results for Quantum Stochastic Smooth Non-convex Optimization}
\label{sec:improve}
\begin{algorithm}[t]
\caption{Fast Quantum Gradient Method (QGM+)}\label{alg:QGM+}
\begin{algorithmic}[1]
\STATE  Construct $\g_0$ as an unbiased estimator of $\nabla f(\x_0)$ with variance at most $\hat{\sigma}_{1,0}^2$. \\[0.15cm]
\STATE \textbf{for} $t=0,1\dots T$\\[0.15cm]
\STATE \quad $\x_{t+1}=\x_t-\eta{\g}_t$\\[0.15cm]
\STATE \quad Flip a coin $\theta_t\in\{0,1\}$ where $P(\theta_t=1)=p_t$\\[0.15cm]
\STATE \quad \textbf{If} $\theta_t=1$ \textbf{then}\\[0.15cm]
\STATE \quad \quad Construct ${\g}_{t+1}$ as an unbiased quantum estimator of $\nabla f(\x_{t+1})$ with variance at most $\hat{\sigma}_{1,t+1}^2$. \\[0.15cm]
\STATE \quad \textbf{else} \\[0.15cm]
\STATE \quad \quad Construct $\Delta{\g}_{t+1}$ as an unbiased quantum estimator of $\nabla f(\x_{t+1})-\nabla f(\x_t)$ with variance at most $\hat{\sigma}_{2,t+1}^2$.\\[0.15cm] 
\STATE \quad \quad $\g_{t+1}=\g_t + \Delta{\g}_{t+1}$.\\[0.15cm]
\STATE \textbf{end for}
\end{algorithmic}
\end{algorithm}

\citet{sidford2023quantum} introduced Q-SPIDER for \textit{smooth non-convex} optimization, with the query complexity of $\tilde{\calO}(d^{1/2}\epsilon^{-5/2})$ on the quantum stochastic gradient oracle. 
Using the same framework of QGFM+ , we propose the fast quantum gradient method (QGM+), which further improves the query complexity of Q-SPIDER.

We present the QGM+ in \Cref{alg:QGM+}. 
The main difference between QGFM+ and QGM+ is that QGM constructs the estimators for $\nabla f(\x)$ and $\nabla f(\x)-\nabla f(\y)$ instead of their smoothed surrogates in line 6 and line 8 by using the quantum stochastic gradient oracle directly~\cite[Definition 4]{sidford2023quantum}. 
We present the setting for Q-SPIDER as follows for being self-contained.
\begin{assumption}[{\cite[Setting of Theorem 7]{sidford2023quantum}}]
    We assume that we are able to access the quantum stochastic oracle that outputs $\nabla F(\cdot;\xi)$ which is a stochastic gradient of $f(\cdot)$ that satisfies
    \begin{align*}
        \EB_{\xi} [\nabla F(\x;\xi)] = \nabla f(\x),~~~\EB_\xi\left[\|\nabla F(\x;\xi)-\nabla f(\x)\|\right]\leq \sigma^2,
    \end{align*}
    and
    \begin{align*}
        \EB_{\xi}\left[\|\nabla F(\x;\xi)-\nabla F(\y;\xi)\|^2\right]\leq l^2\|\x-\y\|^2.
    \end{align*}
\end{assumption}
We also present the definition of $\epsilon$-stationary point of a smooth function.
\begin{definition}
    We say $\x$ is an $\epsilon$-stationary point of a smooth function $f(\cdot)$, if it satisfies that $\|\nabla f(\x)\|\leq \epsilon$.
\end{definition}

We present the query complexity of QGM+ in the following theorem.

\begin{theorem}
\label{thm:QGM+}
    Under the same setting of \cite[Theorem 7]{sidford2023quantum} for Q-SPIDER, QGM+ (Algorithm~\ref{alg:QGM+}) finds the $\epsilon$-stationary point of $f(\cdot)$ using an expected $\tilde{\calO}(\sqrt{d}\epsilon^{-7/3})$ queries of quantum stochasic gradient oracle by setting
    \begin{align*}
        \eta = \frac{1}{2l},~~~p_t\equiv \epsilon^{2/3}\sigma^{-2/3},~~~\hat{\sigma}_{1,t}^2\equiv \frac{\epsilon^2}{2},~~~\text{and}~~~\hat{\sigma}_{2,t}^2 = \frac{l^2\epsilon^{2/3}\|\x_{t}-\x_{t-1}\|}{\sigma^{2/3}}.
    \end{align*}
\end{theorem}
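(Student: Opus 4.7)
The plan is to mirror the proof of Theorem~\ref{thm:FGFQM}, replacing the smoothed surrogate $f_\delta$ by $f$ itself and replacing the dimension-blown Lipschitz constant $L_\delta = \sqrt{d}L/\delta$ by the smoothness parameter $l$. First I would establish a mini-batch quantum gradient estimator analogous to Theorem~\ref{thm:mini-batch}: since $\var{\nabla F(\x;\xi)} \le \sigma^2$ and $\var{\nabla F(\x;\xi) - \nabla F(\y;\xi)} \le l^2\|\x-\y\|^2$ by assumption, the quantum mean-estimation result (Theorem~\ref{thm:random-est}) directly yields an unbiased estimator $\hat{\g}$ of $\nabla f(\x)$ with variance $\hat{\sigma}_{1,t}^2$ using $\tilde{\calO}(\sqrt{d}\sigma \hat{\sigma}_{1,t}^{-1})$ oracle queries, and an estimator $\Delta\g$ of $\nabla f(\x)-\nabla f(\y)$ with variance $\hat{\sigma}_{2,t}^2$ using $\tilde{\calO}(\sqrt{d}\,l\,\|\x-\y\|\hat{\sigma}_{2,t}^{-1})$ queries.

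Next I would run the descent analysis. Using $l$-smoothness of $f$ and the iteration $\x_{t+1}=\x_t-\eta\g_t$, I obtain the analogue of \cref{eq:f_delta_bound},
\begin{align*}
f(\x_{t+1}) \le f(\x_t) - \tfrac{\eta}{2}\|\nabla f(\x_t)\|^2 - \tfrac{\eta}{2}\|\g_t - \nabla f(\x_t)\|^2 - \left(\tfrac{1}{2\eta} - \tfrac{l}{2}\right)\|\x_{t+1}-\x_t\|^2 .
\end{align*}
The PAGE-style recursion gives the variance bound $\EBP{\|\g_{t+1}-\nabla f(\x_{t+1})\|^2} \le p\,\hat{\sigma}_{1,t+1}^2 + (1-p)\|\g_t - \nabla f(\x_t)\|^2 + (1-p)\hat{\sigma}_{2,t+1}^2$. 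Plugging in $\hat{\sigma}_{2,t+1}^2 = l^2\epsilon^{2/3}\|\x_{t+1}-\x_t\|^2/\sigma^{2/3}$ and forming the Lyapunov function $\Phi_t \triangleq f(\x_t) - f^* + \frac{\eta}{2p}\|\g_t - \nabla f(\x_t)\|^2$, the chosen parameters $\eta = 1/(2l)$ and $p = \epsilon^{2/3}/\sigma^{2/3}$ make the $\|\x_{t+1}-\x_t\|^2$ coefficient
\begin{align*}
A = \tfrac{1}{2\eta} - \tfrac{l}{2} - \tfrac{\eta(1-p)}{p}\cdot \tfrac{l^2\epsilon^{2/3}}{\sigma^{2/3}} \ge l - \tfrac{l}{2} - \tfrac{1}{2l}\cdot \tfrac{\sigma^{2/3}}{\epsilon^{2/3}}\cdot \tfrac{l^2\epsilon^{2/3}}{\sigma^{2/3}} = 0 ,
\end{align*}
exactly as in the smoothed case.

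Telescoping the Lyapunov bound and using $\hat{\sigma}_{1,0}^2 = \epsilon^2/2$ gives $\tfrac{1}{T}\sum_t \EBP{\|\nabla f(\x_t)\|^2} \le \tfrac{2}{\eta T}\EBP{\Phi_0-\Phi_T} + \epsilon^2$, and choosing $T = \tilde{\calO}\!\left(l\Delta\epsilon^{-2} + p^{-1}\right) = \tilde{\calO}(l\Delta\epsilon^{-2} + \sigma^{2/3}\epsilon^{-2/3})$ ensures that $\x_{\text{out}}$ drawn uniformly from $\{\x_0,\ldots,\x_{T-1}\}$ satisfies $\EBP{\|\nabla f(\x_{\text{out}})\|^2} \le \epsilon^2$. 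Finally, I count queries: the per-iteration cost is $b_0 = \tilde{\calO}(\sqrt{d}\sigma\epsilon^{-1})$ with probability $p$, and $b_1 = \tilde{\calO}(\sqrt{d}\,l\,\|\x_t - \x_{t-1}\|\hat{\sigma}_{2,t}^{-1}) = \tilde{\calO}(\sqrt{d}\sigma^{1/3}\epsilon^{-1/3})$ with probability $1-p$, so the expected total query complexity is $T(pb_0 + (1-p)b_1) = \tilde{\calO}(\sqrt{d}\sigma^{1/3}\epsilon^{-7/3}) = \tilde{\calO}(\sqrt{d}\epsilon^{-7/3})$ treating $l,\sigma,\Delta$ as constants.

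The main obstacle is less a technical difficulty than making sure the Lyapunov bookkeeping balances properly: the adaptive choice of $\hat{\sigma}_{2,t}^2$ proportional to $\|\x_t-\x_{t-1}\|^2$ is what saves a factor of $\epsilon^{1/6}$ over Q-SPIDER, and it works only because this proportionality cancels exactly against the descent-lemma term at the chosen $\eta$ and $p$. One should also verify that the expected mini-batch sizes from Theorem~\ref{thm:random-est} can be used additively across iterations (e.g.\ by Wald-style or high-probability arguments) to give the stated expected query bound.
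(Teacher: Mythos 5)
Your proposal is correct and follows essentially the same route as the paper: mirror the QGFM+ proof with $f_\delta\to f$ and $L_\delta\to l$, invoke Theorem~\ref{thm:random-est} with $\hat{L}=\sigma$ (resp.\ $l\|\x-\y\|$) to get the per-estimator query costs, form the identical Lyapunov function $\Phi_t=f(\x_t)-f^*+\frac{\eta}{2p}\|\g_t-\nabla f(\x_t)\|^2$, and verify the coefficient of $\|\x_{t+1}-\x_t\|^2$ is nonnegative at $\eta=1/(2l)$, $p=\epsilon^{2/3}\sigma^{-2/3}$. The only divergences are cosmetic: you implicitly read the theorem's $\hat{\sigma}_{2,t}^2$ as proportional to $\|\x_t-\x_{t-1}\|^2$ (matching the paper's proof, which is the intended statement), and your $T=\tilde{\calO}(l\Delta\epsilon^{-2}+p^{-1})$ tracks $p^{-1}=\sigma^{2/3}\epsilon^{-2/3}$ where the paper writes $\sigma^{2/3}\epsilon^{-4/3}$ — but either way the $l\Delta\epsilon^{-2}$ term dominates, so the final $\tilde{\calO}(\sqrt{d}\epsilon^{-7/3})$ bound is the same.
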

\begin{proof}
    According to the variance level we set in Theorem~\ref{thm:QGM+}
    We have 
\begin{align*}
   \EBP{\|\hat{\g}_{t+1}-\nabla f(\x_{t+1})\|^2}\leq \hat{\sigma}^2_{1,t+1} = \frac{\epsilon^2}{2},
\end{align*}
and
\begin{align*}
    \EBP{\|\Delta\g_{t+1}-(\nabla f(\x_{t+1})-\nabla f(\x_t))\|^2}\leq \hat{\sigma}^2_{2,t+1} = \frac{l^2\epsilon^{2/3}}{\sigma^{2/3}}\|\x-\y\|^2
\end{align*}
\begin{align}
\label{eq:f_bound}
\begin{split}
    f(\x_{t+1})&\leq f(\x_t) + \inner{\nabla f(\x_t)}{\x_{t+1}-\x_t} + \frac{l}{2}\|\x_{t+1}-\x_t\|^2\\
    & = f(\x_t) + \inner{\nabla f(\x_t)-{\g}_t}{\x_{t+1}-\x_t}+\inner{{\g}_t}{\x_{t+1}-\x_t} + \frac{l}{2}\|\x_{t+1}-\x_t\|^2\\
    &\leq  f(\x_t)-\frac{\eta}{2}\|\nabla f(\x_t)\|^2-\frac{\eta}{2}\|{\g}_t-\nabla f(\x_t)\|^2 -\left(\frac{1}{2\eta}-\frac{l}{2}\right)\|\x_{t+1}-\x_{t}\|^2.
\end{split}
\end{align}
The variance of $\g_{t+1}$ can be traced by
\begin{align}
\label{eq:g_new_bound}
\begin{split}
     &\EBP{\|{\g}_{t+1}-\nabla f(\x_{t+1})\|^2}\\
    & = p_t\EBP{\|\hat{\g}_{t+1}-\nabla f(\x_{t+1})\|^2}\\
    &~~~~~~~~~~~~+(1-p_t)\EBP{\|\g_t-\nabla f(\x_t) + (\Delta\g_{t+1} - (\nabla f(\x_{t+1})-\nabla f(\x_t)))\|^2}\\
    & = p_t \epsilon^2 + (1-p_t) \|\g_t-\nabla f(\x_{t})\|^2 + (1-p_t)\frac{l^2\epsilon^{2/3}}{\sigma^{2/3}}\cdot \|\x_{t+1}-\x_t\|^2.
\end{split}
\end{align}
We let $p_t \equiv p$ and denote $\Phi_t \triangleq f(\x_t)-f^*+\frac{\eta}{2p}\|\g_t-\nabla f(\x_t)\|^2$. Combining \cref{eq:f_bound} and \cref{eq:g_new_bound}, we have
\begin{align}
\label{eq:QGMPhi_update}
\begin{split}
    \EBP{\Phi_{t+1}}& =\EBP{f(\x_{t+1})+\frac{\eta}{2p}\|\g_{t+1}-\nabla f(\x_{t+1})\|^2}\\
&\leq \EBP{f(\x_t)-\frac{\eta}{2}\|\nabla f(\x_t)\|^2-\frac{\eta}{2}\|\g_t-\nabla f(\x_t)\|^2 -\left(\frac{1}{2\eta}-\frac{l}{2}\right)\|\x_{t+1}-\x_{t}\|^2} \\
&~~~~~~~~~+ \frac{\eta}{2p}\EBP{p \epsilon^2+ (1-p) \|\g_t-\nabla f(\x_{t})\|^2 + (1-p)\frac{l^2\epsilon^{2/3}}{\sigma^{2/3}}\|\x_{t+1}-\x_t\|^2}
\\&\leq \EBP{\Phi_t} -\frac{\eta}{2}\|\nabla f(\x_t)\|^2 - \underbrace{\left(\frac{1}{2\eta}-\frac{l}{2}-\frac{\eta(1-p)}{p}\cdot \left(\frac{l^2\epsilon^{2/3}}{\sigma^{2/3}}\right)\right)}_{B}\|\x_{t+1}-\x_t\|^2 + \frac{\eta\epsilon^2}{2}.
\end{split}
\end{align}
Since we have chosen $\eta = \frac{1}{2l}$ and $p = \epsilon^{2/3}\sigma^{-2/3}$, it holds that
\begin{align*}
    B\geq \frac{l}{2}\left(1- \frac{\epsilon^{2/3}}{p\sigma^{2/3}}\right)\geq 0.
\end{align*}
Thus we have:
\begin{align*}
    \frac{1}{T}\sum_{i=0}^{T-1}\EBP{\|\nabla f(\x_i)\|^2}&\leq \frac{2}{\eta T}\EBP{\Phi_0-\Phi_T} + \frac{\epsilon^2}{2}\\
    &\leq \frac{2}{\eta T} \EBP{f(\x_0)-f(\x_T) + \frac{\eta}{p}\|{\g}_0-\nabla f(\x_0)\|^2} + \frac{\epsilon^2}{2}\\
    &\leq \epsilon^2,
\end{align*}
where the last inequality is by setting
\begin{align*}
    T = \left\lceil8l\Delta\epsilon^{-2}+ 4\sigma^{2/3}\epsilon^{-4/3}\right\rceil.
\end{align*}
Below, we bound the total queries of quantum stochastic gradient oracles.
The expectation oracles to construct $\hat{\g}_t$ is 
\begin{align*}
   b_0 = \tilde{\calO}\left(\sqrt{d}\sigma \hat{\sigma}_{1,t}^{-1}\right) =\tilde{\calO}\left(\sigma \sqrt{d}\epsilon^{-1}\right),
\end{align*}
and the expectation queries to construct $\Delta\g_t$ is
\begin{align*}
   b_1 = \tilde{\calO}\left(\sqrt{d}l\|\x_{t}-\x_{t-1}\|\hat{\sigma}_{2,t}^{-1}\right) =\tilde{\calO}\left(\sqrt{d}\sigma^{1/3}\epsilon^{-1/3}\right).
\end{align*}
Thus, the total stochastic quantum gradient oracles for finding the $\epsilon$-stationary point of $f(\cdot)$ can be bounded by
\begin{align*}
    T(b_0p+(1-p)b_1) = \tilde{\calO}\left(\sqrt{d}(l\Delta\sigma^{1/3}\epsilon^{-7/3}+\sigma\epsilon^{-5/3})\right).
\end{align*}
\end{proof}
\begin{remark}
    QGM+ (\Cref{alg:QGM+}) improves the quantum stochastic gradient oracle of Q-SPIDER (\cite[Algorithm 7]{sidford2023quantum}) by a factor of $\epsilon^{-1/6}$.
\end{remark}

\end{document}